\newif\ifdraft  \draftfalse  
\newif\ifpdflatex  \pdflatexfalse 
\setlist{nolistsep}
\let\optionkeymacros\null
\title{Breadth-first serialisation of trees \\ and rational languages}
\author
{
        Victor Marsault\thanks{Corresponding author, \texttt{victor.marsault@telecom-paristech.fr}}~${}^{,}$\footnotemark~
        \addtocounter{footnote}{-1}
        and 
        Jacques Sakarovitch\thanks{LTCI, CNRS / Telecom ParisTech}~
}
\date{2014\,--\,04\,--\,03}
\newcommand{\BIBINPUTSDIR}{}
\newcommandx{\newtheoremy}[3][2={}]{
  \ifthenelse{\equal{#2}{}}{
    \ifcsmacro{#1}{}{\newtheorem{#1}{#3}}
  }{
    \ifcsmacro{#1}{}{\newtheorem{#1}[#2]{#3}}
  }
}
\newcommand{\thmBlockFont}[1]{#1}
\newcommand{\ldefinition}[1]{\label{d.#1}}
\newcommand{\llemma}[1]{\label{l.#1}}
\newcommand{\lproposition}[1]{\label{p.#1}}
\newcommand{\lremark}[1]{\label{r.#1}}
\newcommand{\lsection}[1]{\label{s.#1}}
\newcommand{\ltable}[1]{\label{t.#1}}
\newcommand{\lfigure}[1]{\label{f.#1}}
\newcommand{\ltheorem}[1]{\label{t.#1}}
\newcommand{\lequation}[1]{\label{eq.#1}}
\newcommand{\generalref}[2]{%
  \ifthenelse{\equal{#1}{eq}}%
  {(\ref{#1.#2})}%
  {\ref{#1.#2}}%
}
\newcommand{\generalpageref}[2]{\pageref{#1.#2}}
\newcommand*{\ralgorithm}{\@ifstar{\generalref{a}}{Algorithm~\ralgorithm*}}
\newcommand*{\palgorithm}{\@ifstar{\generalpageref{a}}{page~\palgorithm*}}
\newcommand*{\rcorollary}{\@ifstar{\generalref{c}}{Corollary~\rcorollary*}}
\newcommand*{\pcorollary}{\@ifstar{\generalpageref{c}}{page~\pcorollary*}}
\newcommand*{\rconjecture}{\@ifstar{\generalref{cj}}{Conjecture~\rconjecture*}}
\newcommand*{\pconjecture}{\@ifstar{\generalpageref{cj}}{page~\pconjecture*}}
\newcommand*{\rdefinition}{\@ifstar{\generalref{d}}{Definition~\rdefinition*}}
\newcommand*{\pdefinition}{\@ifstar{\generalpageref{d}}{page~\pdefinition*}}
\newcommand*{\rexample}{\@ifstar{\generalref{e}}{Example~\rexample*}}
\newcommand*{\pexample}{\@ifstar{\generalpageref{e}}{page~\pexample*}}
\newcommand*{\rlemma}{\@ifstar{\generalref{l}}{Lemma~\rlemma*}}
\newcommand*{\plemma}{\@ifstar{\generalpageref{l}}{page~\plemma*}}
\newcommand*{\rproposition}{\@ifstar{\generalref{p}}{Proposition~\rproposition*}}
\newcommand*{\pproposition}{\@ifstar{\generalpageref{p}}{page~\pproposition*}}
\newcommand*{\rproperty}{\@ifstar{\generalref{pp}}{Proposition~\rproperty*}}
\newcommand*{\pproperty}{\@ifstar{\generalpageref{pp}}{page~\pproperty*}}
\newcommand*{\rprocedure}{\@ifstar{\generalref{pc}}{Procedure~\rprocedure*}}
\newcommand*{\pprocedure}{\@ifstar{\generalpageref{pc}}{page~\pprocedure*}}
\newcommand*{\rremark}{\@ifstar{\generalref{r}}{Remark~\rremark*}}
\newcommand*{\premark}{\@ifstar{\generalpageref{r}}{page~\premark*}}
\newcommand*{\rnotation}{\@ifstar{\generalref{n}}{Notation~\rnotation*}}
\newcommand*{\pnotation}{\@ifstar{\generalpageref{n}}{page~\pnotation*}}
\newcommand*{\rsection}{\@ifstar{\generalref{s}}{Section~\rsection*}}
\newcommand*{\psection}{\@ifstar{\generalpageref{s}}{page~\psection*}}
\newcommand*{\rtable}{\@ifstar{\generalref{t}}{Table~\rtable*}}
\newcommand*{\ptable}{\@ifstar{\generalpageref{t}}{page~\ptable*}}
\newcommand*{\rfigure}{\@ifstar{\generalref{f}}{Figure~\rfigure*}}
\newcommand*{\pfigure}{\@ifstar{\generalpageref{f}}{page~\pfigure*}}
\newcommand*{\requation}{\@ifstar{\generalref{eq}}{Equation~\requation*}}
\newcommand*{\pequation}{\@ifstar{\generalpageref{eq}}{page~\pequation*}}
\newcommand*{\rtheorem}{\@ifstar{\generalref{t}}{Theorem~\rtheorem*}}
\newcommand*{\ptheorem}{\@ifstar{\generalpageref{t}}{page~\ptheorem*}}
\def\Vhrulefill{\leavevmode\leaders\hrule height 0.7ex depth \dimexpr0.4pt-0.7ex\hfill\kern0pt}
\newcommandx{\wlen}[1]{|#1|}
\newcommandx{\cod}[1]{\langle #1 \rangle}
\newcommandx{\floor}[1]{\lfloor #1 \rfloor}
\newcommandx{\bfloor}[1]{\left\lfloor #1 \right\rfloor}
\newcommandx{\bceil}[1]{\left\lceil #1 \right\rceil}
\newcommandx{\ceil}[1]{\lceil #1 \rceil}
\newcommandx{\newcommandy}[5][1=i,3=0,4={}]{%
  \ifthenelse{\isundefined{#2}}{\newcommandx{#2}[#3][#4]{#5}}{%
      \ifthenelse{\equal{#1}{i}}{}{}%
      \ifthenelse{\equal{#1}{o}}{\renewcommandx{#2}[#3][#4]{#5}}{}%
    }%
}
\newcommand{\val}[1]{\widebar{#1}}
\newcommand{\strong}[1]{\textbf{#1}}
\newcommand{\ssc}[1]{\textbf{\textsc{#1}}}
\newcommand{\set}[1]{\{#1\}}
\newcommand{\Z}{\mathbb{Z}}
\newcommand{\N}{\mathbb{N}}
\newcommand{\widebar}{\overline}
\newcommand{\nlb}{\nolinebreak}
\renewcommand{\thmBlockFont}[1]{\ssc{#1}}
\newcommand{\vmfiguretodo}[2][]{%
  \begin{figure}[ht!]
    \frame{%
      \begin{minipage}{\linewidth}
        ~\hfill~
        \vspace*{#2cm}
      \end{minipage}
    }
    \ifthenelse{\equal{#1}{}}{}{\caption{#1}}
  \end{figure}
}
\newcommandx{\newcommandWithStar}[3][1=i]{%
  \newcommandy[#1]{#2}{\protect\@ifstar{\leavevmode\protect\nlb$\protect#3$}{#3}}
}
\renewcommand{\geq}{\geqslant}
\renewcommand{\phi}{\varphi}
\renewcommand{\epsilon}{\varepsilon}
\newcommand{\e}{\text{\quad}}                 
\newcommand{\ee}{\text{\qquad}}               
\newcommand{\eee}{\text{\qquad \qquad}} 
\newsavebox{\InterSymbolSpace}
\savebox{\InterSymbolSpace}{\hspace{0.125em}}
\newsavebox{\SideFormulaSpace}
\savebox{\SideFormulaSpace}{\hspace{0.2em}}
\newcommand{\xmd}{\usebox{\InterSymbolSpace}} 
\newcommand{\eqpnt}{\makebox[0pt][l]{\: .}}
\newcommand{\eqpntvrg}{\makebox[0pt][l]{\: ;}}
\newcommand{\quantsp}{\ee }
\newcommand{\LatinLocution}[1]{{\itshape #1}\xspace}
\newcommand{\cf}{\LatinLocution{cf.}}
\newcommand{\UNmbb}{{\mathchoice
{\hbox{$\textstyle\rm 1\kern-0.2em I$}}%
{\hbox{$\textstyle\rm 1\kern-0.2em I$}}%
{\hbox{$\scriptstyle\rm 1\kern-0.15em I$}}%
{\hbox{$\scriptscriptstyle\rm 1\kern-0.1em I$}}%
}}
\newcommand{\Ac}{\mathcal{A}}
\newcommand{\Bc}{\mathcal{B}}
\newlength{\ArrowDiagSize}
\newlength{\ArrowDiagWidth}
\newenvironment{SLDiag}%
   {\psset{style=SLDiagStyle}\begin{psmatrix}}%
   {\end{psmatrix}}%
\newcommand{\CDSL}{\begin{SLDiag}}
\newcommand{\CDSLF}{\end{SLDiag}}
\newenvironment{DiagraBig}%
{\psmatrix[colsep=7ex,rowsep=6ex,arrows=->,nodesep=1ex,npos=.45]}%
{\endpsmatrix}
\newcommand{\CDB}{\begin{DiagraBig}}
\newcommand{\CDBF}{\end{DiagraBig}}
\newenvironment{DiagraSmall}%
{\psmatrix[colsep=3ex,rowsep=3ex,arrows=->,nodesep=1ex,npos=.45]}%
{\endpsmatrix}
\newcommand{\CDS}{\begin{DiagraSmall}}
\newcommand{\CDSF}{\end{DiagraSmall}}
\newcommand{\matriceuu}[1]%
    {\begin{pmatrix} #1 \end{pmatrix}}
\newcommand{\matricedd}[4]%
    {\begin{pmatrix} #1 & #2 \\ #3 & #4 \end{pmatrix}}
\newcommand{\vecteurd}[2]%
    {\begin{pmatrix} #1 \\ #2 \end{pmatrix}}
\newcommand{\ligned}[2]%
    {\begin{pmatrix} #1 & #2 \end{pmatrix}}
\newcommand{\matricett}[9]%
    {\begin{pmatrix}  #1 & #2 & #3 \\
                      #4 & #5 & #6 \\
                      #7 & #8 & #9 \end{pmatrix}}
\newcommand{\vecteurt}[3]%
    {\begin{pmatrix} #1 \\ #2 \\ #3 \end{pmatrix}}
\newcommand{\lignet}[3]%
    {\begin{pmatrix} #1 & #2 & #3 \end{pmatrix}}
\newlength{\jsWidthCol}
\newlength{\blocinterligne}
\newlength{\blocinterligned}
\newlength{\temparraycolsep}
\newlength{\longueurbloc}
\newlength{\hauteurbloc}
\newlength{\centragebloc}
\newlength{\longueurblc}
\newlength{\hauteurblc}
\newlength{\centrageblc}
\newcommand{\blocligne}[1]%
    {\framebox[\longueurbloc]{$#1$}}
\newcommand{\blocmatrice}[1]%
    {\framebox[\longueurbloc]{\rule[\centragebloc]{0mm}{\hauteurbloc}$#1$}}
\newcommand{\blocvecteur}[1]%
    {\framebox{\rule[\centragebloc]{0mm}{\hauteurbloc}$#1$}}
\newcommand{\blcligne}[1]%
    {\framebox[\longueurblc]{$#1$}}
\newcommand{\blcmatrice}[1]%
    {\framebox[\longueurblc]{\rule[\centrageblc]{0mm}{\hauteurblc}$#1$}}
\newcommand{\blcvecteur}[1]%
    {\framebox{\rule[\centrageblc]{0mm}{\hauteurblc}$#1$}}
\newcommand{\matriceddblvs}[4]
   {\setlength{\temparraycolsep}{\arraycolsep}%
    \setlength{\arraycolsep}{1.3pt}%
        \left (%
    \begin{array}{cc}%
                #1  & \blcligne{#2} \\
            \blcvecteur{#3} & \blcmatrice{#4}
        \end{array}%
        \right )%
    \setlength{\arraycolsep}{\temparraycolsep}%
   }%
\newcommand{\vecteurdblvs}[2]%
   {\setlength{\temparraycolsep}{\arraycolsep}%
    \setlength{\arraycolsep}{1.5pt}%
        \left (%
    \begin{array}{c}%
                #1  \\
            \blcvecteur{#2}
        \end{array}%
        \right )%
    \setlength{\arraycolsep}{\temparraycolsep}%
   }%
\newcommand{\lignedblvs}[2]%
   {\setlength{\temparraycolsep}{\arraycolsep}%
    \setlength{\arraycolsep}{1.5pt}%
        \left (%
    \begin{array}{cc}%
                #1  & \blcligne{#2}
        \end{array}%
        \right )%
    \setlength{\arraycolsep}{\temparraycolsep}%
   }%
\newcommand{\matricettblvs}[9]
   {\setlength{\temparraycolsep}{\arraycolsep}%
    \setlength{\arraycolsep}{1.5pt}%
        \left (%
    \begin{array}{ccc}%
                #1  & \blcligne{#2} & #3\\
            \blcvecteur{#4} & \blcmatrice{#5} & \blcvecteur{#6}\\
                #7  & \blcligne{#8} & #9\\
        \end{array}%
        \right )%
    \setlength{\arraycolsep}{\temparraycolsep}%
   }%
\newcommand{\vecteurtblvs}[3]%
   {\setlength{\temparraycolsep}{\arraycolsep}%
    \setlength{\arraycolsep}{1.5pt}%
        \left (%
    \begin{array}{c}%
                #1  \\
            \blcvecteur{#2}\\
                #3
        \end{array}%
        \right )%
    \setlength{\arraycolsep}{\temparraycolsep}%
   }%
\newcommand{\lignetblvs}[3]%
   {\setlength{\temparraycolsep}{\arraycolsep}%
    \setlength{\arraycolsep}{1.5pt}%
        \left (%
    \begin{array}{ccc}%
                #1  & \blcligne{#2} & #3
        \end{array}%
        \right )%
    \setlength{\arraycolsep}{\temparraycolsep}%
   }%
\newcommand{\matricettblblvs}[9]
   {\setlength{\temparraycolsep}{\arraycolsep}%
    \setlength{\arraycolsep}{1.5pt}%
        \left (%
    \begin{array}{ccc}%
                #1  & \blcligne{#2} & \blcligne{#3}\\
            \blcvecteur{#4} & \blcmatrice{#5} & \blcmatrice{#6}\\
                \blcvecteur{#7}  & \blcmatrice{#8} & \blcmatrice{#9}\\
        \end{array}%
        \right )%
    \setlength{\arraycolsep}{\temparraycolsep}%
   }%
\newcommand{\vecteurtblblvs}[3]%
   {\setlength{\temparraycolsep}{\arraycolsep}%
    \setlength{\arraycolsep}{1.5pt}%
        \left (%
    \begin{array}{c}%
                #1  \\
            \blcvecteur{#2}\\
                \blcvecteur{#3}
        \end{array}%
        \right )%
    \setlength{\arraycolsep}{\temparraycolsep}%
   }%
\newcommand{\lignetblblvs}[3]%
   {\setlength{\temparraycolsep}{\arraycolsep}%
    \setlength{\arraycolsep}{1.5pt}%
        \left (%
    \begin{array}{ccc}%
                #1  & \blcligne{#2} & \blcligne{#3}
        \end{array}%
        \right )%
    \setlength{\arraycolsep}{\temparraycolsep}%
   }%
\newlength{\DefiTest}\setlength{\DefiTest}{0pt}%
\newlength{\DefiHeightu}\newlength{\DefiHeightd}%
\newlength{\DefiDepthu}\newlength{\DefiDepthd}%
\newcommand{\Defi}[2]%
    {%
     \settoheight{\DefiHeightu}{${\displaystyle #1}$}%
     \settodepth{\DefiDepthu}{${\displaystyle #1}$}%
     \addtolength{\DefiHeightu}{\DefiDepthu}%
     \settoheight{\DefiHeightd}{${\displaystyle #2}$}%
     \settodepth{\DefiDepthd}{${\displaystyle #2}$}%
     \addtolength{\DefiHeightd}{\DefiDepthd}%
     \left\{#1%
     \rule[-\DefiDepthd]{\DefiTest}{\DefiHeightd}%
     \xmd\right|%
     \left.%
     \rule[-\DefiDepthu]{\DefiTest}{\DefiHeightu}%
      #2\right\}%
     }
\newlength{\ColoText}
\newlength{\ColoFigu}
\newlength{\TextFiguSpace}
\newlength{\parindenttemp} 
\newlength{\parskiptemp} 
\newlength{\fboxseptemp} 
\newcommand{\TFBoxing}{}
\newcommand{\TFVertAlig}{}
\newcommand{\LeftLarg}{}
\renewcommand{\LeftLarg}{.66}
\ifdraft\renewcommand{\TFBoxing}{\fbox}\fi
\newcommand{\TxtFg}[3]%
   {%
    \setlength{\ColoText}{#1\textwidth}%
    \setlength{\ColoFigu}{\textwidth}%
    \addtolength{\ColoFigu}{-\ColoText}%
    \addtolength{\ColoText}{-.5\TextFiguSpace}%
    \addtolength{\ColoFigu}{-.5\TextFiguSpace}%
    \ifdraft\setlength{\fboxsep}{0pt}\fi
    \noi
    \TFBoxing{%
       \begin{minipage}[\TFVertAlig]{\ColoText}%
          \setlength{\parindent}{\parindenttemp}%
          \setlength{\parskip}{\parskiptemp}%
          \par\vspace*{0mm}
             #2
       \end{minipage}%
             }%
    \hspace*{\TextFiguSpace}%
    \TFBoxing{%
       \begin{minipage}[\TFVertAlig]{\ColoFigu}%
          \par\vspace*{0mm}%
             #3%
       \end{minipage}%
             }%
    \ifdraft\setlength{\fboxsep}{\fboxseptemp}\fi
   }%
\newcommand{\TextFigu}[3][\LeftLarg]%
   {\renewcommand{\TFVertAlig}{t}\TxtFg{#1}{#2}{#3}}
\newcommand{\TextFiguC}[3][\LeftLarg]%
   {\renewcommand{\TFVertAlig}{c}\TxtFg{#1}{#2}{#3}}
\newcommand{\TextFiguX}[3][\LeftLarg]
   {%
    \setlength{\ColoText}{#1\textwidth}%
    \setlength{\ColoFigu}{\textwidth}%
    \addtolength{\ColoFigu}{-\ColoText}%
    \addtolength{\ColoText}{-.5\TextFiguSpace}%
    \addtolength{\ColoFigu}{-.5\TextFiguSpace}%
    \addtolength{\ColoFigu}{\ETAExtendedLineWidth}
    \ifdraft\setlength{\fboxsep}{0pt}\fi
    \noi
    \ifodd\value{page}%
       \TFBoxing{%
          \begin{minipage}[t]{\ColoText}%
             \RstBLS
             \setlength{\parindent}{\parindenttemp}%
             \setlength{\parskip}{\parskiptemp}%
             \par\vspace*{0mm}
                #2
          \end{minipage}%
                }%
       \hspace*{\TextFiguSpace}%
       \TFBoxing{%
          \begin{minipage}[t]{\ColoFigu}%
             \par\vspace*{0mm}%
                #3%
          \end{minipage}%
                }%
    \else
       \hspace*{-\ETAExtendedLineWidth}
       \TFBoxing{%
          \begin{minipage}[t]{\ColoFigu}%
             \par\vspace*{0mm}%
                #3%
          \end{minipage}%
                }%
       \hspace*{\TextFiguSpace}%
       \TFBoxing{%
          \begin{minipage}[t]{\ColoText}%
             \RstBLS
             \setlength{\parindent}{\parindenttemp}%
             \setlength{\parskip}{\parskiptemp}%
             \par\vspace*{0mm}
                #2
          \end{minipage}%
                }%
    \fi%
    \ifdraft\setlength{\fboxsep}{\fboxseptemp}\fi
   }
\newcommand{\NoteEnMarge}[1]%
   {%
    \marginpar[\begin{flushright}%
               {\sl {\scriptsize #1}}%
               \end{flushright}]%
              {\begin{flushleft}%
               {\sl {\scriptsize #1}}%
               \end{flushleft}}%
	}%
\newcommand{\Axio}[1]%
   {\pointn #1\hspace*{.1em}\jspointtiret\hspace*{.4em}\ignorespaces}
\newcommand{\ExtnF}[1]%
   {\overset{{\scriptscriptstyle \pmb{\smile}}}{#1}}
\newcommand{\DiffF}[1]%
   {\overset{{\scriptscriptstyle \pmb{\lor}}}{#1}}
\newcommand{\LocaF}[1]%
   {\overset{{\scriptscriptstyle \leftrightarrow}}{#1}}
\newcommand{\jsDist}[2][{}]%
   {\operatorname{\mathbf{d}_{#1}}\left(#2\right)}
\renewcommand{\lim}{{\operatornamewithlimits{\mathsf{lim}}}}
\newcommand{\SerSAnMon}[2]%
    {#1 \langle \! \langle  #2  \rangle \! \rangle }
\newcommand{\SerSAnMonD}[2]%
    {\left[#1\right] \langle \! \langle  #2  \rangle \! \rangle }
\newcommand{\SerMon}[1]%
    {\!\langle \! \langle  #1  \rangle \! \rangle }
\newcommand{\PolSAnMon}[2]%
    {{#1 \langle  #2 \rangle }}
\newcommand{\PolMon}[1]%
    {{\!\langle  #1 \rangle }}
\newcommand{\jsStar}[1]{{{#1}^{*}}}
\newcommand{\Ae}{\jsStar{A}}
\newcommand{\jsPlus}[1]{{{#1}^{+}}}
\newcommand{\Ap}{\jsPlus{A}}
\newcommand{\iotaK}{\iota_{\ShiftInd{K}}}
\newcommand{\compos}{\ccdot }
\newcommand{\phiikpsi}%
{{\varphi ^{-1}\! \compos        \iotaK \! \compos \! \psi }}
\newcommand{\phiiotpsi}[1]%
{{\varphi ^{-1}\! \compos        \iota _{\ShiftInd{#1}} \! \compos \! \psi }}
\newcommand{\phiintkpsi}[1]%
{{(#1\varphi ^{-1}\! \cap K) \psi }}
\newcommand{\jsless}
   {\mathrel{\leqslant_{\!\!\!\!\scriptscriptstyle{/}}}}
\newcommand{\jsgrea}
   {\mathrel{\geqslant_{\!\!\!\!\scriptscriptstyle{\backslash}}}}
\newcommand{\lexiconeq}
   {\preccurlyeq_{\!\!\!\!\!\scalebox{1.8 1}{\scriptscriptstyle{\pmb{/}}}}}
\newcommand{\jsAutUn}[1]%
   {\mbox{$\left\langle \thinspace #1 \thinspace \right\rangle $}}
\newcommand{\aut}[1]{\jsAutUn{#1}} 
\newcommand{\ShiftInd}[1]{\raisebox{-0.3ex}{$\scriptstyle{#1}$}}
\newcommand{\actb}{\mathbin{\raisebox{0.2ex}%
                        {${\scriptscriptstyle \circ} $}}}
\newcommand{\ccdot}{\actb} 
\newlength{\vbh}\newlength{\vbd}\newlength{\vbt}%
\newcommand{\CompAuto}[1]%
    {%
     \settodepth{\vbd}{\mbox{$\displaystyle{#1\strut}$}}%
     \settoheight{\vbh}{\mbox{$\displaystyle{#1\strut}$}}%
     \setlength{\vbt}{\vbh}\addtolength{\vbt}{\vbd}%
     {}%
     \psline[linewidth=0.8pt]{c-c}(0,-.65\vbd)(0,.9\vbh)%
     \hspace*{0.7pt}%
     {#1}%
     \kern0.8pt%
     \psline[linewidth=0.8pt]{c-c}(0,-.65\vbd)(0,.9\vbh)%
     }%
\newcommand{\bornedeuxlignes}[2]%
{\mbox{$
\begin{array}{c}{\scriptstyle #1}\\ {\scriptstyle #2} \end{array}
       $}}
\renewcommand{\path}[1]{\xrightarrow{\ #1 \ }} 
\newcommand{\ExpDer}[2][a]%
    {\operatorname{\frac{\partial}{\partial \mbox{$#1$}}}#2}
\newcommand{\ExpDerP}[2][a]%
    {\operatorname{\frac{\partial}{\partial\mbox{$#1$}}}\left(#2\right)}
\newcommand{\ExpDerr}[2][a]%
    {\operatorname{\frac{\partial_{\mathrm{R}}}{\partial \mbox{$#1$}}}#2}
\newcommand{\ExpDerB}[2][a]%
   {\operatorname{\frac{\partial_\mathsf{b}}{\partial \mbox{$#1$}}}#2}
\newcommand{\ExpDerBP}[2][a]%
   {\operatorname{\frac{\partial_\mathsf{b}}{\partial \mbox{$#1$}}}\left(#2\right)}
\newcommand{\customthmname}{Theorem}
\newcommand{\THFracs}[2]{\frac{#1}{#2}}%
\newcommand{\pqs}{\THFracs{p}{q}}
\newcommand{\Indpq}[1]{#1_{\pqs}}%
\newcommand{\Tpq}{\Indpq{T}}
\newcommandy{\rhythm}[1][1={r}]{{\mathbf{#1}}}
\newcommandy{\rrhythm}[1][1={r}]{\overset{\raisebox{-0.1em}{\tiny$\circ$}}{\rhythm[#1]}}
\newcommandy{\labelling}[1][1={\lambda}]{{\boldsymbol{#1}}}
\newcommandy{\signature}[1][1={s}]{{\boldsymbol{#1}}}
\newcommandy{\slabelling}{\labelling[\sigma]}
\newcommand{\intint}[2]{\llbracket#1,#2\rrbracket}
\newcommandy{\pset}[2][1=n,2=R]{E_{#1}^{#2}}
\newcommandy{\rcod}[2][2=r]{\cod{#1}_{\rhythm[#2]}}
\newcommandWithStar{\Tr}{\mathcal{T}_{\rhythm}}
\newcommandWithStar{\taupq}{\Indpq{\tau}}
\newcommandWithStar{\pq}{\frac{p}{q}} 
\newcommandWithStar{\Lpq}{\Indpq{L}}
\newcommandWithStar{\Lpqbar}{\widebar{\Indpq{L}}}
\newcommandWithStar{\Lr}{L_{\rhythm}}
\newcommandWithStar{\Mr}{M_{\rhythm}}
\newcommandWithStar{\Kr}{K_{\rhythm}}
\newcommandWithStar{\Ar}{\mathcal{A}_{\rhythm}}
\newcommandWithStar{\Aps}{A_p^{\xmd *}}
\newcommandWithStar{\Aq}{A_q}
\newcommandWithStar{\Aqs}{A_q^{\xmd *}}
\newcommandWithStar{\negr}{\rhythm[s]_{\frac{p}{q}}}
\newcommandWithStar{\negL}{\widebar{\Lpq}}
\newcommandWithStar{\rpq}{\rhythm_{\frac{p}{q}}}
\newcommandWithStar{\canonlab}{\labelling{\gamma}_\frac{p}{q}}
\newcommandWithStar{\negl}{\labelling[\mu]_{\frac{p}{q}}}
\newcommandWithStar{\Tpqbar}{\widebar{\Tpq}}
\newcommandWithStar{\treer}{\rtree{\rhythm}}
\newcommandWithStar{\Lrl}{L_{\rhythm,\labelling}}
\newcommand{\rtree}[1]{\mathcal{J}_{#1}}
\newcommand{\jpu}{{j~\!\!\text{\psscalebox{0.95}{$+$}\psscalebox{0.90}{$1$}}}}
\newcommandx{\yaHelper}[2][1=\empty]{%
\ifthenelse{\equal{#1}{\empty}}%
  { \ensuremath{\scriptstyle{#2}}} 
  { \raisebox{ #1 }[0pt][0pt]{\ensuremath{\scriptstyle{#2}}}}  
}
\newcommandx{\yrightarrow}[4][1=\empty, 2=\empty, 4=\empty, usedefault=@]{%
  \ifthenelse{\equal{#2}{\empty}}
  { \xrightarrow{ \protect{ \yaHelper[#4]{#3} } } } 
  { \xrightarrow[ \protect{ \yaHelper[#2]{#1} } ]{ \protect{ \yaHelper[#4]{#3} } } } 
}
\newcommand{\itree}{i-tree\xspace}
\newcommand{\itrees}{i-trees\xspace}
\newcommand{\Itree}{I-tree\xspace}
\newcommand{\vmcard}[1]{{\tt Card}(#1)}
\renewcommand*{\thefigure}{\arabic{figure}}
\let\c@table\c@figure
\begin{document}
\thispagestyle{plain}
\tolerance=5000

\ifdraft
  \setcounter{page}{-1}
  \vspace*{0cm}
  \clearpage
  \vfill
  \setcounter{tocdepth}{2}
  \tableofcontents
  \thispagestyle{plain}
  \vfill
  \vfill
  \clearpage

\fi
\maketitle
\thispagestyle{plain}
\pagestyle{plain}
\begin{abstract}
  We present here the notion of \emph{breadth-first signature} 
    and its relationship with numeration system theory.
  It is the serialisation into an \emph{infinite word} 
    of an ordered infinite tree of finite degree.
  We study which class of languages corresponds to which class of words and,
    more specifically, using a known construction from numeration system theory,
    we prove that the signature of rational languages are substitutive sequences.
\end{abstract}



\section{Introduction}
This work introduces a new notion: 
  the breadth-first signature of a tree (or of a language).
It consists of an infinite word describing the tree (or the language).
Depending on the direction (from tree to word, or conversely), it is either
  a \emph{serialisation} of the tree into an infinite word
  or a \emph{generation} of the tree by the word.
We study here the serialisation of rational, or regular, languages.
 

The (breath-first) signature of an ordered tree of finite degree
  is a sequence of integers, the sequence of the
  degrees of the nodes visited by a breadth-first traversal of the tree.
Since the tree is ordered,
  there is a \emph{canonical} breadth-first traversal; 
  hence the signature is uniquely defined and characteristic of the tree.
  
Similarly, we call \emph{labelling} the infinite sequence
  of the labels of the edges visited by 
  the breadth-first traversal of a labelled tree.
The pair signature/labelling is once again characteristic of the labelled tree.
It provides an effective serialisation of labelled
  trees, hence of prefix-closed languages.


\smallskip

The serialisation of a (prefix-closed) language is very close,
  and in some sense, equivalent to the enumeration of the words
  of the language in the radix order.
It makes then this notion particularly fit to describing the 
  languages of integer representations in various numeration systems.
It is of course the case for the representations in an integer base~$p$ 
  which corresponds to the signature~$p^\omega$,
  the constant sequence.
But it is also the case for non-standard numeration systems such as the 
  Fibonacci numeration system whose representation language has for 
  signature the Fibonacci word (\cf \rsection{subst});
  and the rational base numeration systems as defined in \cite{AkiyEtAl08}
  and whose representation languages have periodic signatures, that is,
  signatures that are infinite periodic words.
To tell the truth, it is the latter case that first motivated our study of 
  signatures.
In another work still in preparation~\cite{MarsSaka2014a}, we study trees
  and languages that have periodic signatures.
  
In the present work, we first study in detail the notion of signature of trees
  (\rsection{ontrees}) and of languages (\rsection{language}).
Then, in \rsection{subst}, we give with \rtheorem{subt<->rat} a 
  characterisation of the signatures of (prefix-closed)
  rational languages as those whose signature is a substitutive sequence.
The proof of this result relies on a correspondence between 
  substitutive sequences and automata due to
  Maes and Rigo \cite{RigoMaes2002} and whose principle 
  goes back indeed to the work of
  Cobham \cite{Cobham1972}.

\section{On trees}\lsection{ontrees}

Classically, trees are undirected graphs in which any two vertices are 
  connected by exactly one path (\cf \cite{Dies97}, for instance).
Our view differs in two respects.

First, a tree is a \emph{directed} graph~$T=(V,\Gamma)$ such that there exist a 
  \emph{unique} vertex, called \emph{root}, which has no incoming arc,
    and there is a \emph{unique (oriented) path} from the root to every 
    other vertex.
Elements of the tree~$T$ gets particular names: 
  vertices are called \emph{nodes}; 
  if~$(x,y)$ is an arc,~$y$ is called \emph{a child} of~$x$ and~$x$ 
  \emph{the father} of~$y$; 
  a node without children is a \emph{leaf}.
We draw trees with the root on the left and arcs rightward.

Second, our trees are \emph{ordered}, that is, that there is a total order 
  on the set of children of every node. 
The order will be implicit in the figure, with the convention that lowermost 
  children are the smallest (according to this order).
In one word, the two trees of \rfigure{orderedtrees} are different (non-isomorphic).

The class of trees we consider is quite close to the one from \cite{tata2007}, 
  but our approach differs greatly. 
They generate trees through tree automata, a depth-first process while we 
  describe them in a breadth-first manner.

\begin{figure}[ht!]
  \centering
  \includegraphics[width=0.9\linewidth]{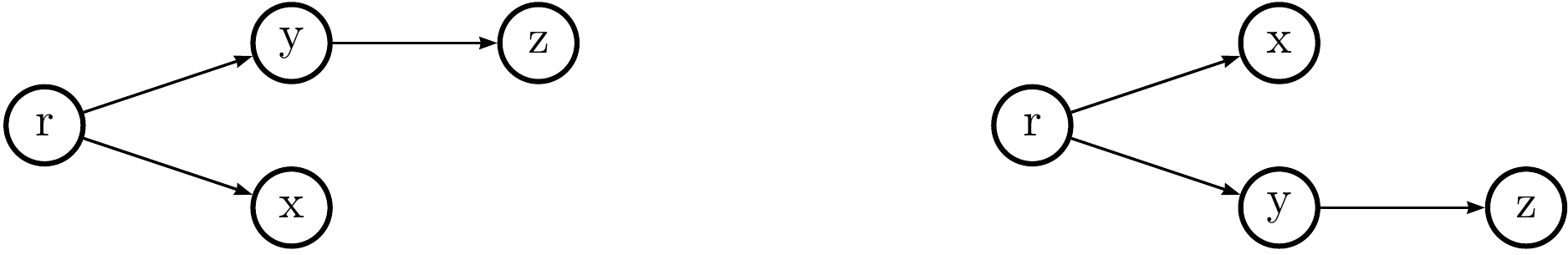}
  \caption{Two non-isomorphic trees}
  \lfigure{orderedtrees}
\end{figure}

The \emph{degree} of a node is the number of its children; it may be finite or infinite.
A tree is of \emph{bounded degree} (resp. \emph{finite degree}) if the 
  degree of every node is bounded (resp. finite).
In the following, we deal with infinite trees of bounded degree only.
Even though most definitions would still work for infinite trees of finite degree,
  this more general setting has no use when considering languages, as we will 
  in most of the article.

\subsection{Relational definition of trees}\lsection{reldeftree}

Given a particular tree, the breadth-first traversal
  naturally and uniquely (since the children of every node are ordered) 
  defines a total ordering of its nodes.
We may then consider that the set of nodes of a tree is always the set of 
  integers~$\N$; 
  the node~$n$ of~$\N$ being the~$(n+1)$-th node visited by the traversal.


\begin{proposition}\lproposition{tree-def}
  A directed graph~$(\N,\theta)$, where the relation~$\theta:\N\rightarrow\N$ satisfies the two conditions
  \begin{enumerate}[label={\normalfont(\roman{*})}]
    \item $\theta$ is injective;
    \item $\forall n\in\N$, $\exists m\in\N, \ee m>n \e $ and $\e  \theta(\intint{0}{n})=\intint{1}{m}$;
  \end{enumerate}
  is an infinite ordered tree of finite degree, written~$T_\theta$.
\end{proposition}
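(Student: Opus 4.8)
The plan is to read $\theta$ as a binary relation on $\N$: for $n\in\N$ the set $\theta(n)=\{k\in\N\mid (n,k)\in\theta\}$ is the set of children of node~$n$, the relation is extended to sets by $\theta(S)=\bigcup_{k\in S}\theta(k)$, and ``$\theta$ injective'' is read as ``the sets $\theta(k)$, $k\in\N$, are pairwise disjoint'', i.e.\ each node has at most one father (the single-valued reading would be incompatible with~(ii)). The heart of the argument is to extract from~(ii) an integer sequence that governs everything.

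First I would fix, for each $n\in\N$, the integer $m_n>n$ supplied by~(ii), so that $\theta(\intint{0}{n})=\intint{1}{m_n}$. From $\intint{0}{n-1}\subseteq\intint{0}{n}$ one gets $\intint{1}{m_{n-1}}\subseteq\intint{1}{m_n}$, so $(m_n)_{n\geq 0}$ is non-decreasing, and $m_{n-1}>n-1$ gives $m_{n-1}\geq n$ for $n\geq 1$; put $m_{-1}:=0$. Since $n\notin\intint{0}{n-1}$, injectivity gives $\theta(\intint{0}{n})=\theta(n)\cup\theta(\intint{0}{n-1})$ with $\theta(n)$ disjoint from $\theta(\intint{0}{n-1})$, whence
\[
  \theta(n)=\intint{1}{m_n}\setminus\intint{1}{m_{n-1}}=\intint{m_{n-1}+1}{m_n}\,,
\]
a (possibly empty) block of consecutive integers. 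In particular each $\theta(n)$ is finite, so $T_\theta$ has finite degree; and every element of $\theta(n)$ is $\geq m_{n-1}+1\geq n+1$, i.e.\ \emph{every child has strictly larger index than its father}. Moreover $\bigcup_{n\in\N}\theta(n)=\bigcup_{n\in\N}\intint{1}{m_n}=\N\setminus\{0\}$ since $m_n\geq n+1\to\infty$; with injectivity this means the $\theta(n)$ partition $\N\setminus\{0\}$. Hence $0$ has no incoming arc, and is the only such node, while each $n\geq 1$ has a unique incoming arc, whose source I call $f(n)$, the father of $n$; by the italicised remark $f(n)<n$.

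It remains to verify the two tree axioms. \emph{Existence of a path from the root:} iterating $f$ from any $n\geq 1$ yields a strictly decreasing sequence $n>f(n)>f(f(n))>\cdots$, which therefore reaches $0$ in finitely many steps ($0$ being the only node where $f$ is undefined), and reversing it gives a directed path from $0$ to $n$. \emph{Uniqueness:} along any directed path $0=x_0\to x_1\to\cdots\to x_\ell=n$, each $x_i$ with $i\geq 1$ is a child, hence $\neq 0$ and $x_{i-1}=f(x_i)$; so, read backwards from $x_\ell=n$, the path is completely forced and coincides with the one already constructed (in particular they have the same length). Finally $T_\theta$ is made ordered by restricting to each $\theta(n)$ the natural order of~$\N$, and it is infinite because its node set is~$\N$.

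I expect no genuine obstacle: the work is almost entirely in the bookkeeping of the first two paragraphs, notably the treatment of the boundary case $n=0$ (the fact $0\notin\intint{1}{m_0}$ is exactly what makes $0$ the root) via the convention $m_{-1}=0$, and the realisation that~(ii) packs together three statements at once — the children of a node form a consecutive block, children follow their father in the breadth-first numbering, and $0$ is the root — after which the tree axioms fall out mechanically.
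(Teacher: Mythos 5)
Your proof is correct and follows essentially the same route as the paper's: read $\theta$ as the child relation, use injectivity to make the father relation $\theta^{-1}$ a function on $\N\setminus\{0\}$, use (ii) to show each $\theta(n)$ is a finite interval whose elements exceed $n$, and conclude existence and uniqueness of the path to the root by iterating the strictly decreasing father function. You simply carry out more explicitly the bookkeeping (the sequence $m_n$ and the identity $\theta(n)=\theta(\intint{0}{n})\setminus\theta(\intint{0}{n-1})$) that the paper states in compressed form.
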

\begin{proof} 
In this setting,~$\theta$ is the child relation,~$0$ is the 
  root,~$\theta(0)=\theta(\intint{0}{0})=\intint{1}{k}$, is an interval of \N*; 
  it is the (finite) ordered set of the~$k>0$ children of the root. 
Given a \emph{positive} integer~$n$, 
  $$ \theta(n) = \theta(\intint{0}{n})\setminus\theta(\intint{0}{n-1}) $$
is the (possibly empty) interval of \N*  of the children of the node~$n$.

\noindent Hence the \emph{father relation}~$\theta^{-1}$ satisfies the following properties:
  \begin{enumerate}
    \item $\theta^{-1}$ is a function --- from (i);
    \item Dom$(\theta^{-1})= \N_+$ --- from (ii);
    \item $\theta^{-1}(n) < n$ --- from (ii).
  \end{enumerate}
  It then yields a unique path (in~$\theta^*$) from the root to every 
    vertex in~$\N_+$ 
\end{proof}

\paragraph{Computing the relation from the tree.} 
A breath-first traversal of an infinite ordered tree~$T$ of finite 
  degree inductively maps the set of nodes of T onto \N* and builds a child 
  relation~$\theta$ 
  by the following procedure whose principle is essential.

The root of~$T$ is mapped onto~$0$, the ordered set of the ~$k$ children of the 
  root is mapped onto the interval~$\intint{1}{k}$, that 
  is~$\theta(0)=\intint{1}{k}$ and two integer indices are set: 
  the first one represents \emph{the node to be treated}, call it~$n$, and is set to~$1$;
  the second one represents \emph{the last node created}, call it~$m$, and is set to~$k$.
At every step of the procedure the node~$n$ is considered the ordered set of its~$k_n$ children is mapped
  onto the interval~$\intint{m+1}{\,m+k_n}$, that is~$\theta(n)=\intint{m+1}{\,m+k_n}$ 
  (possibly empty if~$k_n=0$);
  then~$n$ is incremented by~$1$,~$m$ by~$k_n$, and the procedure takes on a new step.
  
Since~$T$ is of finite degree, each step is well-defined and since~$T$ is infinite,
  the procedure never ends.
Nevertheless,~$\theta(n)$ is eventually defined for every integer~$n$.
The way it is defined makes $\theta$ meet Conditions $(i)$ and~$(ii)$
  of \rproposition{tree-def} and the tree~$T_\theta$ is isomorphic to~$T$.
  

\paragraph{On \itrees.}
It will prove to be extremely convenient to have a slightly different look at trees 
  and to consider that the root of a tree is also a \emph{child of itself} 
  that is, bears a loop onto itself.
It amounts to changing the Condition~(ii) of the child relation~$\theta$ to
\smallskip
\begin{itemize}
  \item[(ii')] $\forall n\in\N$, $\exists m\in\N, \ee m>n \e $ and $\e  \theta(\intint{0}{n})=\intint{0}{m}$\,;
\end{itemize}
\smallskip
the difference being that the interval~$\intint{1}{m}$ of~(ii) is changed to~$\intint{0}{m}$ in~(ii'). 

It should be noted that this convention is sometimes taken when implementing
  tree-like structures (for instance the unix/linux file system).
It implies that the father relation~$\theta^{-1}$ is now a 
  \emph{function}~$\N\rightarrow\N$ and will 
  make the connexion with numeration systems very natural.

Of course, a graph~$T_\theta$ defined by a relation~$\theta$ that meets 
  Condition~$(i)$ and~$(ii')$ is not formally a tree;
  we call such structures \emph{\itrees}.
It is so close to a tree that we pass from tree to \itree (or conversely)
  with no further ado.
%
%
%
%

\subsection{Breadth-first signature of a tree}

\begin{definition}[Breadth-first signature of a tree]\ldefinition{signature}
  Given a tree of child relation~$\theta$, we call \emph{breadth-first signature} or, for short, \emph{signature} of~$\theta$ 
  the infinite integer sequence
  \begin{subequations}\lequation{signature}
    \begin{align}
      \signature = s_0 s_1 \cdots s_k \cdots \ee \text{where~} s_i &= \vmcard{\theta(i)} \ee \text{ for all integers }i>0 \\  
     \text{and~} s_0 &= \vmcard{\theta(0)}+1 \lequation{n0}
    \end{align}
  \end{subequations}
\end{definition}
where~$\vmcard{X}$ is the cardinal of the set~$X$.
It follows directly from this definition that the breadth-first signature 
  is characteristic of its tree, as stated below.
\begin{proposition}
  Two trees with the same breadth-first signature are equal.
\end{proposition}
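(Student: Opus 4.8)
The plan is to prove the stronger fact that a tree is completely reconstructible from its signature: I would write down, for every~$n$, an explicit formula for the interval~$\theta(n)$ in terms of $\signature$ alone. Since, by the identification set up in \rsection{reldeftree}, a tree \emph{is} its child relation, this at once gives the statement.

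First I would record two elementary consequences of Conditions~$(i)$ and~$(ii)$ of \rproposition{tree-def}. Because~$\theta$ is injective, the images $\theta(0),\theta(1),\dots$ are pairwise disjoint subsets of~$\N$, so for every~$n$ one has the disjoint decomposition $\theta(\intint{0}{n})=\bigcup_{i=0}^{n}\theta(i)$ and hence $\vmcard{\theta(\intint{0}{n})}=\sum_{i=0}^{n}\vmcard{\theta(i)}$. Condition~$(ii)$ tells us moreover that $\theta(\intint{0}{n})$ is the interval $\intint{1}{m_n}$ for some integer~$m_n$; comparing cardinalities forces $m_n=\sum_{i=0}^{n}\vmcard{\theta(i)}$. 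Substituting $\vmcard{\theta(i)}=s_i$ for~$i\geq 1$ and $\vmcard{\theta(0)}=s_0-1$ from \rdefinition{signature} yields $m_n=\bigl(\sum_{i=0}^{n}s_i\bigr)-1$, a number depending only on the signature. Then I would recover~$\theta(n)$ itself: from $\intint{0}{n-1}\subseteq\intint{0}{n}$ we get $\intint{1}{m_{n-1}}\subseteq\intint{1}{m_n}$, hence $m_{n-1}\leq m_n$, and using disjointness once more, $\theta(n)=\theta(\intint{0}{n})\setminus\theta(\intint{0}{n-1})=\intint{m_{n-1}+1}{m_n}$ for every~$n\geq 1$, while $\theta(0)=\intint{1}{m_0}$ (the two cases merge if one puts $m_{-1}=0$). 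Thus each $\theta(n)$ is an interval whose endpoints are functions of $\signature$ only, so $\theta$ is determined by $\signature$; if $T_\theta$ and $T_{\theta'}$ have the same signature then $\theta=\theta'$, that is, the two trees are equal.

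I expect no real obstacle here: this is exactly the ``computing the relation from the tree'' procedure described after \rproposition{tree-def}, read off from the signature rather than from the tree. The only point deserving a word of care is the passage from ``$\theta(\intint{0}{n})=\intint{1}{m_n}$'' to ``$\theta(n)=\intint{m_{n-1}+1}{m_n}$'', where one must check that it relies solely on the cardinalities~$s_i$ and on the injectivity of~$\theta$, and uses no further information about the tree.
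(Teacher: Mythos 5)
Your proof is correct and is essentially the argument the paper has in mind: the paper asserts the proposition follows directly from the definition, and the explicit reconstruction you give (recovering each interval~$\theta(n)=\intint{m_{n-1}+1}{m_n}$ with~$m_n=\bigl(\sum_{i=0}^{n}s_i\bigr)-1$ from injectivity and Condition~$(ii)$) is exactly the generation procedure the paper later spells out in \rsection{sig->tree}. No gap; your only added care --- checking that the passage from~$\theta(\intint{0}{n})$ to~$\theta(n)$ uses nothing beyond cardinalities and injectivity --- is well placed.
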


The special case of~0 (\cf~\requation{n0}) is an artefact
  of the non-surjectivity of~$\theta$ already discussed in the previous 
  \rsection{reldeftree}.
For short, the signature is more canonically associated with an \itree than with the corresponding tree.

\subsection{Generating a tree by its signature}\lsection{sig->tree}
A signature~$\rhythm[s]=s_0 \xmd s_1 \cdots s_k \cdots$
  is \emph{valid} if it satisfies the following equation
\begin{equation}\label{eq.validsignature}
  \forall j \in \N \quantsp \sum_{i=0}^{j} s_i ~>~ \jpu\eqpnt
\end{equation}

This restriction ensures that the sequence is indeed the signature of a
  tree, as stated below; if it were not the case, one could still apply the
  procedure hereafter, but the resulting graph would not be connected.~\footnote{\requation{validsignature} is the counterpart of 
  the `$m>n$' condition in \rproposition{tree-def}(ii).}
  
\begin{proposition}
  For every valid signature~$\signature$, 
    there exists a unique tree whose signature is equal to~$\signature$.
\end{proposition}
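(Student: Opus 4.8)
The plan is to run, in reverse, the procedure behind \rproposition{tree-def}: from a valid signature $\signature=s_0s_1\cdots$ I would build an explicit child relation $\theta$ on $\N$, verify Conditions~(i) and~(ii) of that proposition, and then read off that the signature of the resulting tree $T_\theta$ is $\signature$ again. Uniqueness will then come essentially for free.

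For the construction, set $\sigma_j=\sum_{i=0}^{j}s_i$ for $j\ge 0$ and let $\theta(0)=\intint{1}{\sigma_0-1}$ and $\theta(j)=\intint{\sigma_{j-1}}{\sigma_j-1}$ for $j\ge 1$, each of these being a possibly empty interval of~$\N$. The reason for this choice is that it telescopes: $\theta(\intint{0}{n})=\intint{1}{\sigma_n-1}$ for every~$n$. From there the checks are short. Injectivity (Condition~(i)) holds because $(\sigma_j)_j$ is non-decreasing, so the intervals $\theta(j)$ are consecutive and pairwise disjoint --- an empty $\theta(j)$, which is exactly the case of a leaf $s_j=0$, creating no overlap. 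Condition~(ii) is where validity enters: \eqref{eq.validsignature} at index~$n$ says $\sigma_n>n+1$, hence $m:=\sigma_n-1$ satisfies $m>n$ and $\theta(\intint{0}{n})=\intint{1}{m}$, as required; this also forces $\sigma_n\to\infty$, so $T_\theta$ genuinely has $\N$ as its node set and is infinite. \rproposition{tree-def} then yields that $T_\theta$ is an infinite ordered tree of finite degree.

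It remains to compute its signature. By \rdefinition{signature} the $i$-th letter is $\vmcard{\theta(i)}=\sigma_i-\sigma_{i-1}=s_i$ for $i\ge 1$, and the $0$-th letter is $\vmcard{\theta(0)}+1=(\sigma_0-1)+1=s_0$ --- the $+1$ artefact of \requation{n0} cancelling the $-1$ built into $\theta(0)$. So $T_\theta$ has signature $\signature$, and any other tree with signature $\signature$ equals $T_\theta$ since the signature is characteristic of its tree, as recorded right after \rdefinition{signature}. I expect no real obstacle here: the entire content is bookkeeping, the only delicate points being the off-by-one around the root --- cleaner in the \itree picture, where the root is its own child --- and allowing the $\theta(j)$ to be empty without spoiling injectivity.
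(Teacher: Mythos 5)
Your proposal is correct and matches the paper's own argument: the paper proves this proposition by the same breadth-first construction, maintaining the running sum $m$ and assigning to node $n$ the next interval of $s_n$ fresh integers, with validity guaranteeing $n<m$ throughout. You merely make the bookkeeping explicit via the partial sums $\sigma_j$ and spell out the verification of Conditions (i)--(ii) of the earlier proposition and the off-by-one at the root, which the paper leaves informal.
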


We will describe the tree whose signature is equal to a given 
  signature~$\signature$ by enumerating its edges in the breadth-first order.
It is essentially the reverse as the construction of the relation~$\theta$
  from the respective tree, given at~\rsection{reldeftree}

We maintain two integers: the starting point~$n$ and the end point~$m$ of the transition.
In one step of the algorithm,~$s_n$ nodes are created, corresponding to 
  the integers~$m,m+1,\ldots,{(m+s_n-1)}$, and~$s_n$ edges are created
  (all from~$n$, and one to each of this new nodes).
  Then~$n$ is incremented by~1, and~$m$ by~$s_n$.

The validity of~$\signature$ ensures that, at all point~$n<m$, hence that every 
  node has a father smaller than itself.
\rfigure{T321}, in appendix, shows the first few steps of the procedure for the purely 
  periodic signature~$(321)^\omega$, while \rfigure{t321_f} shows the 
  resulting \itree.

\begin{figure}[ht!]
\centering
\subfloat[The \itree~$T_{\signature}$]{\lfigure{t321_f}\includegraphics[width=0.47\linewidth]{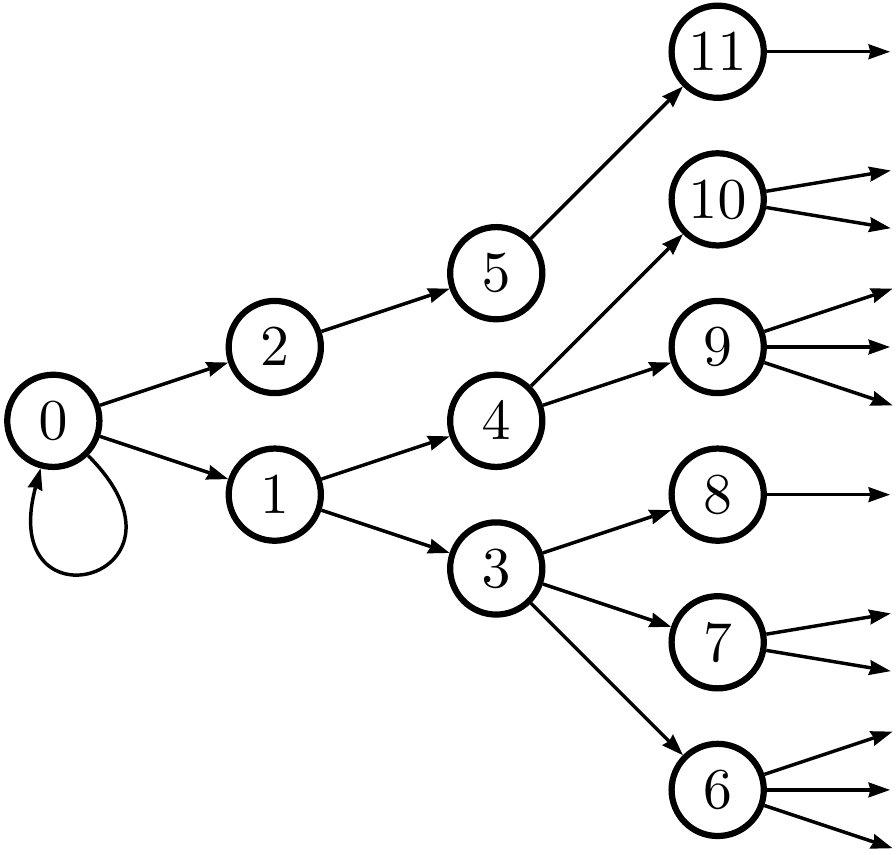}}
\hfill
\subfloat[$L_{(\signature,\labelling) },with~\labelling = abd\xmd .\xmd bc \xmd . \xmd a \xmd . \xmd bcd \xmd . \xmd ad \xmd . \xmd d \xmd \ldots$]{\lfigure{l321}\hspace*{0.5cm}\includegraphics[width=0.47\linewidth]{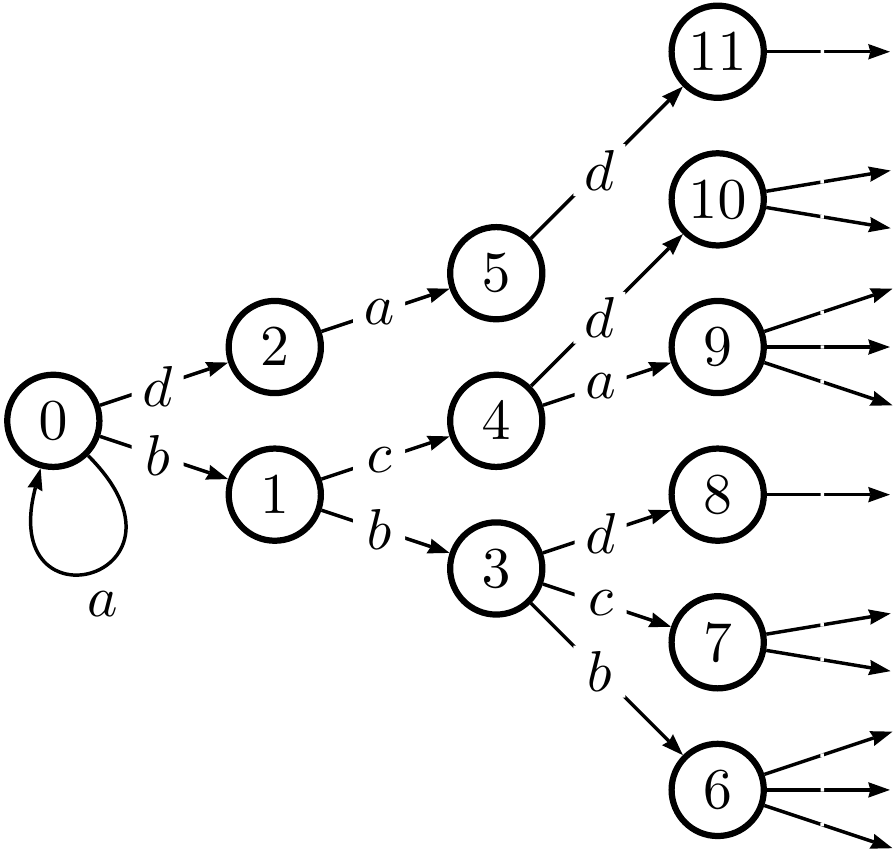}}
\caption{\Itree generated by the signature~$\signature=(321)^\omega$}
\lfigure{T321_f}
\end{figure}

\newcommand{\dynvmscale}{0.8}
\begin{figure}[p!]
  \centering
  \subfloat[$s_0=\mathbf{3}$]{\includegraphics[scale=\dynvmscale]{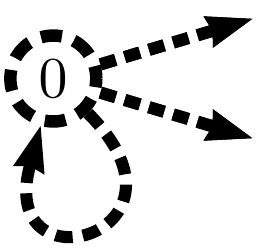}}
  \hfill
  \subfloat[$s_1=\mathbf{2}$]{\includegraphics[scale=\dynvmscale]{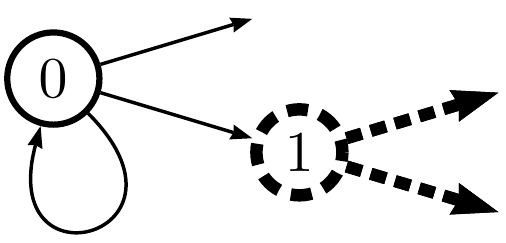}}
  \hfill
  \subfloat[$s_2=\mathbf1$,]{\includegraphics[scale=\dynvmscale]{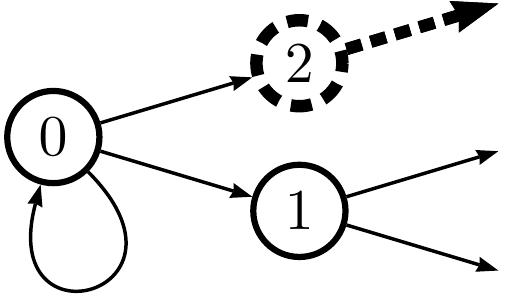}}
  \hfill
  \subfloat[$s_3=\mathbf3$]{\includegraphics[scale=\dynvmscale]{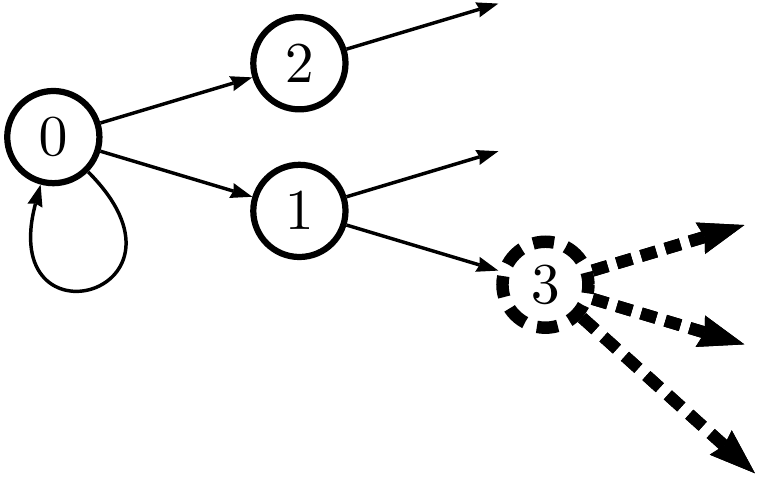}}
  \hfill
  \subfloat[$s_4=\mathbf2$]{\includegraphics[scale=\dynvmscale]{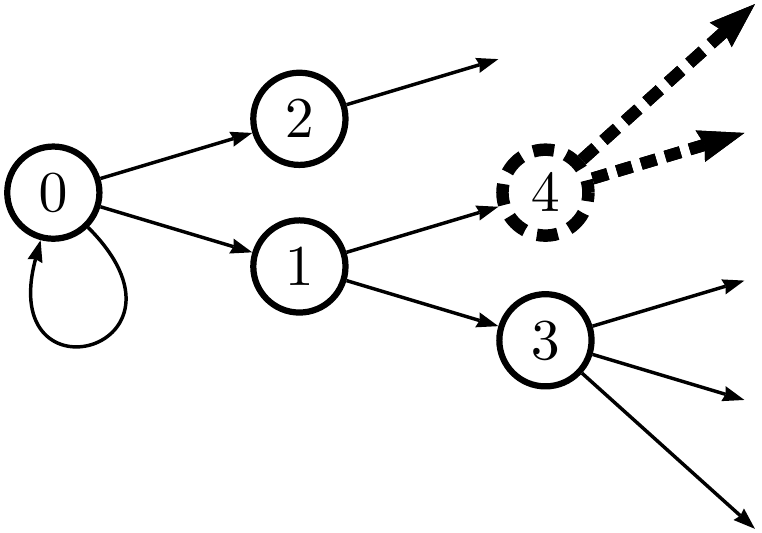}}
  \hfill
  \subfloat[$s_5=\mathbf1$]{\includegraphics[scale=\dynvmscale]{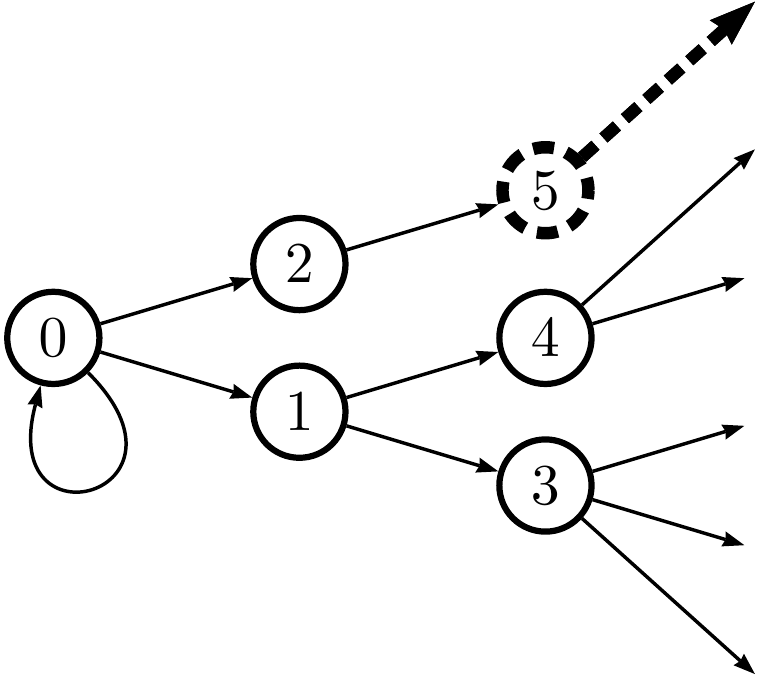}}
  \hfill
  \subfloat[$s_6=\mathbf3$]{\includegraphics[scale=\dynvmscale]{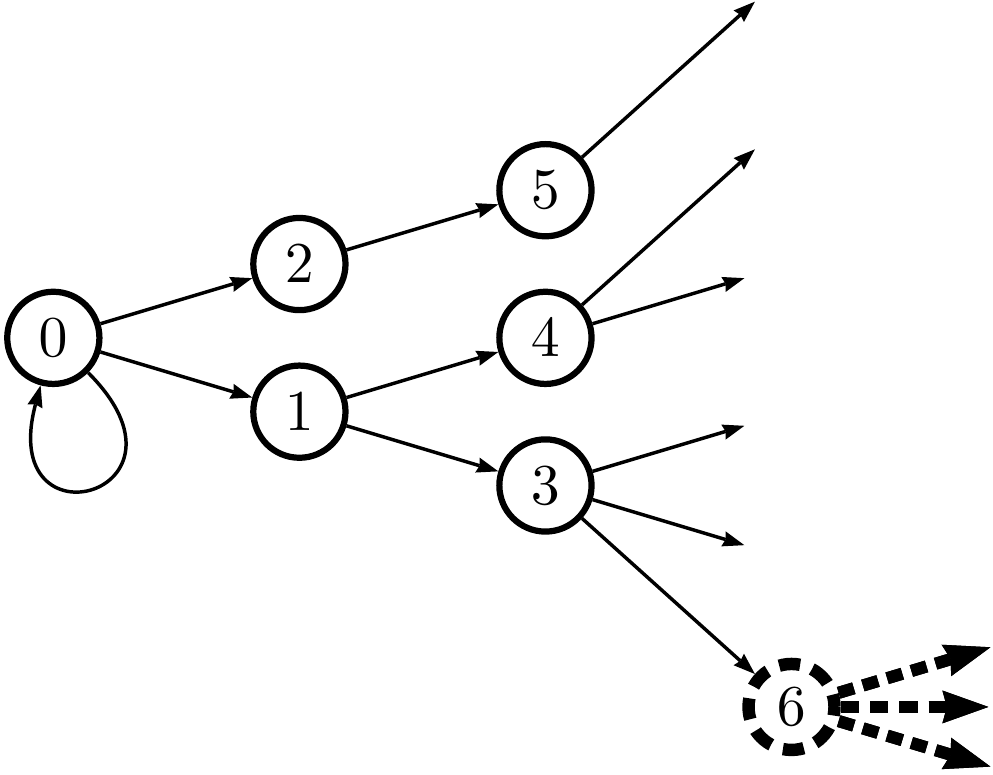}}
  \hfill
  \subfloat[$s_7=\mathbf2$]{\includegraphics[scale=\dynvmscale]{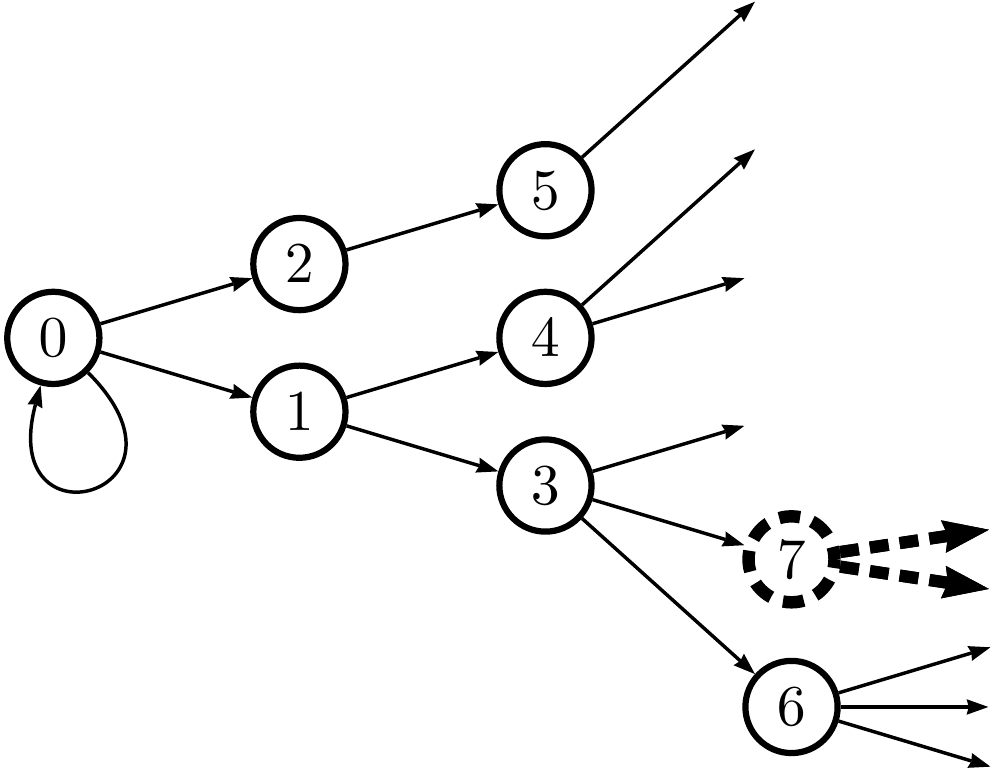}}
  %
  \caption{The first eight steps of the generation of~$T_{(321)^\omega}$}
  \lfigure{T321}
\end{figure}


\section{Signature for languages}\lsection{language}
  An \emph{alphabet} is a set of \emph{letters} and will always be ordered in the following.
  Whenever we use a latin or digits alphabet, it will be ordered as usual
    (that is,~$a<b<c<\cdots$ or~$0<1<2<\cdots$).
  A \emph{word}~$w$ is a finite sequence of letters~$a_0\xmd a_1\cdots a_{n-1}$ and
    its length is denoted by~$\wlen{w} = n$.
\subsection{Labelling}
A labelling, together with a signature~$\signature$, is the description of a labelled
  tree (that is, essentially a prefix-closed language). 
It corresponds to the sequence of the transition labels of the tree, taken in
  breadth-first order.
It follows that a \emph{labelling} is simply a sequence of letters of some alphabet.
%
%

However, for a labelled tree to effectively represent a (prefix-closed) language, it must satisfies 
  some properties.
For instance, two edges with the same starting point must have distinct labels.
More generally, the labels must be consistent with the  
  breadth-first traversal: an edge to a \emph{smaller} child
  must be labelled by a \emph{smaller} letter.
The notion validity for a labelling subdues these issues.
%
%


Given a signature~$\signature$, a labelling~$\labelling$ over an alphabet~$A$ is \emph{valid} 
  (with respect to~$\signature$) 
  if there exists a family~$\left\lbrace w_k \right\rbrace_{k\in\N}$ of words over~$A$ such that 
  \begin{enumerate}
    \item $\labelling$ is the concatenation of the family~$\left\lbrace w_k \right\rbrace_{k\in\N}$: 
      $$\labelling = w_0\xmd w_1 \xmd \cdots \xmd w_k\cdots\eqpntvrg$$ 
    \item the length of each word~$w_k$ is equal to~$s_k$:
      $$ \forall k\in\N \quantsp \wlen{w_k} = s_k \eqpntvrg $$
    \item the letters of each word~$w_k$ are in strictly increasing order:
    $$ \forall w_k = a_0 a_1 \cdots a_n \quantsp a_0 < a_1 < a_2 < \cdots < a_n \eqpnt $$
  \end{enumerate}
For instance if the signature starts with~$3 \xmd 2 \xmd 1 \xmd 3 \cdots$, a valid labelling could
  start with~$abd\xmd .\xmd bc \xmd .\xmd a \xmd . \xmd bcd\ldots$\xmd; or 
  with~$012\xmd .\xmd01\xmd .\xmd0\xmd .\xmd012\ldots$
%
%
A pair signature/labelling~$(\signature,\labelling)$ 
  is called a \emph{labelled signature}; it is \emph{valid}
  if both~$\signature$ is valid and~$\labelling$ is valid (w.r. to~$\signature$).

A valid labelled signature~$(\signature,\labelling)$ 
  uniquely defines a labelled tree, by using a 
  procedure analogous to the one from
  \rsection{sig->tree}. Every edge~$i\pathx{}j$ created 
  is labelled by~$\lambda_j$.
For every node~$n$, we denote by~$\cod{n}_{(\signature,\labelling)}$ 
  the word labelling the unique
  path~$0\pathx{~} n$.
We denote by~$L_{(\signature,\labelling)}$ the language of such words: \footnote{%
  This process closely related to the creation of an abstract numeration systems (\cf \cite{LecoRigo10hb})
  which takes a language~$L$ (usually assumed to be rational) and set the representation of~$n$ in the new numeration system
  as the~$(n+1)$-th word of~$L$ in radix order.
}
  $$  L_{(\signature,\labelling)} = \set{ \cod{n}_{(\signature,\labelling)}~|~n\in\N } \eqpnt $$
\rfigure{l321}, \pfigure{l321}, shows the language whose signature is~$\signature=(321)^\omega$ and labelling starts 
  with~${\labelling = abd\xmd .\xmd bc \xmd . \xmd a \xmd . \xmd bcd \xmd . \xmd ad \xmd . \xmd d \xmd \ldots}$
The validity of the labelled signature insures that the words~$\cod{n}_{(\signature,\labelling)}$ are all distinct, 
  hence the following lemma.
\begin{lemma}
  The~$(n+1)$-th word of~$L_{(\signature,\labelling)}$ in radix order is~$\cod{n}_{(\signature,\labelling)}$
\end{lemma}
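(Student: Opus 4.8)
The plan is to show that the breadth-first traversal of the labelled tree visits the nodes $0,1,2,\ldots$ in exactly radix order of their associated words $\cod{n}_{(\signature,\labelling)}$, so that ``$(n+1)$-th in radix order'' and ``$(n+1)$-th visited'' coincide. First I would recall that validity of $(\signature,\labelling)$ already gives (via the preceding discussion) that the words $\cod{n}_{(\signature,\labelling)}$ are pairwise distinct and that $L_{(\signature,\labelling)}$ is exactly their set, so the map $n \mapsto \cod{n}_{(\signature,\labelling)}$ is a bijection from $\N$ onto $L_{(\signature,\labelling)}$; it then suffices to prove that this map is strictly increasing for the radix order, i.e.\ that $m < n$ implies $\cod{m}_{(\signature,\labelling)} <_{\mathrm{radix}} \cod{n}_{(\signature,\labelling)}$.

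The key structural facts I would use are those recorded in \rsection{reldeftree} and \rsection{sig->tree}: every node has a strictly smaller father (the validity inequality \requation{validsignature}, resp.\ Condition (iii) in \rproposition{tree-def}), and the generation procedure creates children of node $0$, then of node $1$, and so on, so that the children of a node $k$ form a contiguous block of indices, and if $k < k'$ then every child of $k$ has an index smaller than every child of $k'$. From this one gets, by induction on $n$, that $\wlen{\cod{n}_{(\signature,\labelling)}}$ is nondecreasing in $n$ (a node and its father differ in depth by exactly $1$, and fathers are monotone in the sense just described), and more precisely that the sequence of words is grouped by length into consecutive blocks. The main step is then: among nodes at the same depth $d$, the index order agrees with the lexicographic order of the corresponding words. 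I would prove this by induction on $d$: if $m,n$ both have depth $d$ and $m<n$, let $f(m), f(n)$ be their fathers; either $f(m) = f(n)$, in which case $m$ and $n$ are distinct children of the same node and, by Condition (iii) of the labelling validity (letters on the edges out of a node are strictly increasing in the child order), the last letters of $\cod{m}$ and $\cod{n}$ satisfy $\lambda_m < \lambda_n$ while the prefixes coincide, giving $\cod{m} <_{\mathrm{lex}} \cod{n}$; or $f(m) < f(n)$ (using that fathers respect the block structure, $m<n$ with equal depth forces $f(m)\le f(n)$), and then by the induction hypothesis $\cod{f(m)} <_{\mathrm{lex}} \cod{f(n)}$, and since these are proper prefixes of $\cod{m},\cod{n}$ of equal length $d-1$, prepending the respective last letters preserves the lexicographic inequality, so again $\cod{m} <_{\mathrm{lex}} \cod{n}$.

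Combining the two observations: if $m<n$ then either $\wlen{\cod{m}_{(\signature,\labelling)}} < \wlen{\cod{n}_{(\signature,\labelling)}}$, in which case $\cod{m} <_{\mathrm{radix}} \cod{n}$ by definition of the radix (length-then-lexicographic) order; or the two words have equal length, i.e.\ $m$ and $n$ lie at the same depth, and then $\cod{m} <_{\mathrm{lex}} \cod{n}$ by the depth induction, so again $\cod{m} <_{\mathrm{radix}} \cod{n}$. Hence $n \mapsto \cod{n}_{(\signature,\labelling)}$ is the increasing enumeration of $L_{(\signature,\labelling)}$ in radix order, which is the claim.

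The step I expect to be the main obstacle is the bookkeeping that fathers are monotone with respect to the block-of-children structure — precisely, that $m<n$ with $m,n$ at the same depth implies $f(m)\le f(n)$, and that passing to fathers strictly decreases depth by exactly one. This is intuitively clear from the breadth-first generation procedure of \rsection{sig->tree} (children are created in increasing order of their father's index, depth by depth), but making it rigorous requires either formalising the ``depth'' function and proving by induction on the generation step that all nodes created up to step $k$ have depth $\le$ (depth of $k$)$+1$ and are ordered by depth, or equivalently arguing directly from Conditions (i) and (ii$'$) on $\theta$. Everything after that — the two inductions on depth and the radix-order comparison — is then routine.
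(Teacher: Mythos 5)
Your proof is correct and is exactly the argument the paper leaves implicit: the paper gives no proof beyond observing that validity makes the words $\cod{n}_{(\signature,\labelling)}$ pairwise distinct, treating the agreement between breadth-first index order and radix order as immediate from the generation procedure of \rsection{sig->tree}. Your two inductions (word lengths nondecreasing in the node index because fathers are monotone over the contiguous children blocks; within a fixed depth, index order equals lexicographic order via the strictly-increasing-letters condition on the labelling) supply precisely the bookkeeping the paper omits, so there is nothing to correct beyond the minor misattribution of the inequality $\theta^{-1}(n)<n$, which is a derived property in the proof of \rproposition{tree-def} rather than one of its stated conditions.
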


%




%
Conversely, given any prefix-closed language~$L$ over an alphabet~$A$, 
  there is a unique valid labelled signature~$(\signature,\labelling)$ 
  generating it;~$\signature$ is defined by the underlying tree of~$L$
  and~$\labelling$ is the sequence of the labels of the 
  edges of the underlying tree of~$L$
  when taken in breadth-first order.
The next statement follows immediately.
  
  

\begin{proposition}
  For every valid labelled signature~$(\signature,\labelling)$, 
    there exists a unique language whose signature is equal to~$(\signature,\labelling)$.
\end{proposition}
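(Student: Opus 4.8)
The plan is to observe that the two passages already constructed in this section --- from a valid labelled signature to the language $L_{(\signature,\labelling)}$, and from a prefix-closed language to its (necessarily valid) labelled signature --- are mutually inverse; both existence and uniqueness then follow at once.

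For \emph{existence}, I would feed the valid labelled signature $(\signature,\labelling)$ into the edge-enumeration procedure of \rsection{sig->tree}, which is legitimate because $\signature$ is valid, and then label each edge $i\pathx{}j$ it produces by the letter $\lambda_j$ of $\labelling$. The only thing to check is that this yields a genuine labelled tree, i.e.\ that the edges leaving any node carry pairwise distinct labels listed in increasing order; but this is exactly the content of the three clauses defining validity of $\labelling$ with respect to $\signature$. Indeed, the lengths $s_0, s_1, \ldots$ force the unique block decomposition $\labelling = w_0\, w_1 \cdots$ with $\wlen{w_k} = s_k$; the block $w_n$ is precisely the word formed by the labels of the $s_n$ edges leaving the node $n$, read in breadth-first order; and the strict increase of the letters of $w_n$ makes those labels distinct and ordered as the breadth-first convention requires. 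The resulting labelled tree defines $L_{(\signature,\labelling)}$, whose labelled signature is $(\signature,\labelling)$ by construction.

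For \emph{uniqueness}, I would show that any prefix-closed language $L$ whose signature equals $(\signature,\labelling)$ must coincide with $L_{(\signature,\labelling)}$. The underlying tree of $L$ has signature $\signature$, hence --- by the proposition stating that two trees with the same signature are equal --- it is the very tree just constructed, its nodes being identified with $\N$ through the canonical breadth-first traversal; its edge labels, read in breadth-first order, form $\labelling$, so the labelled tree underlying $L$ is exactly the one above. Finally $L$ is recovered as $\set{\cod{n}_{(\signature,\labelling)}~|~n\in\N}$, which is $L_{(\signature,\labelling)}$: here validity is used once more, the strictly increasing condition on siblings propagating by induction on depth to show that distinct nodes receive distinct path-words, so that $L$ does not collapse and really has the prescribed underlying tree.

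The only genuine point in this argument is the compatibility between clause~(3) of validity and the ordering convention on the children of a node, together with the resulting distinctness of the path-words; everything else is bookkeeping about the breadth-first enumeration already carried out in \rsection{reldeftree} and \rsection{sig->tree}. I expect that small compatibility check, rather than any computation, to be the main obstacle.
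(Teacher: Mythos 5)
Your argument is correct and follows essentially the same route as the paper, which gives no explicit proof but asserts the proposition ``follows immediately'' from the two constructions you invoke: the edge-enumeration procedure labelled by $\labelling$ for existence, and the fact that a prefix-closed language determines a unique valid labelled signature (together with the proposition that two trees with the same signature are equal) for uniqueness. You have merely spelled out the bookkeeping the paper leaves implicit.
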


\subsection{Minimal labelling and rational trees}
We call \emph{minimal labelling} of a signature~$\signature$ (or equivalently of a tree~$T_{\signature}$)
  the labelling induced by the order of children:
  $$ \labelling[\mu] = w_0\xmd w_1 \xmd \cdots \xmd w_k\cdots \ee \text{where} 
  \e \forall k\in\N \e w_k = 0 \xmd 1\xmd 2 \cdots n \e \text{and} \e n = (s_k-1) \eqpnt$$
Intuitively, it corresponds to add labels in the tree such that 
  the transition~$n\pathx{a}m$ is labelled by~$a=0$ if~$m$ is the 
  \emph{smallest} child of~$n$, and that the transition~$n\pathx{b}(m+k)$ 
  is labelled by~$b=k$, if it exists.
It is always possible to label a tree in such a way and 
  it produces a valid labelled signature.
%
%
Intuitively, the minimal labelling is the simplest way to label a tree, in the sense that 
  it adds the less possible complexity.
The next lemma gives an example of this intuition.
\begin{lemma}
  Let $(\signature,\labelling)$ be a valid labelled signature, 
    and~$\signature[\mu]$ the minimal labelling associated with~$\signature$.
  If~$L_{(\signature,\labelling)}$ is a rational language, then~$L_{(\signature,\labelling[\mu])}$ is rational as well.
\end{lemma}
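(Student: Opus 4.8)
The plan is to realise $L_{(\signature,\labelling[\mu])}$ as the image of $L_{(\signature,\labelling)}$ under a length-preserving, position-by-position recoding that can be read off a finite automaton; rationality then transports along the recoding.

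First I would fix a \emph{trim} deterministic automaton $\mathcal{A}=(Q,A,\delta,q_0)$ recognising $L_{(\signature,\labelling)}$ (possible since the language is rational), and note that, as the language is prefix-closed and non-empty, every state of $\mathcal{A}$ is final. I would then record the basic fact that for a state $q=\delta(q_0,u)$ the set of letters labelling the transitions leaving $q$ equals the set of first letters of the residual $u^{-1}L_{(\signature,\labelling)}$, hence depends on $q$ alone; and that when $u$ is the word labelling the root-to-$n$ path in the labelled tree of $(\signature,\labelling)$, this set is exactly the set of labels of the edges leaving the node $n$, which by validity of $\labelling$ are the letters of the strictly increasing block $w_n$ of $\labelling$, read along the children of $n$ in order.

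Next I would build the automaton $\mathcal{A}'$: same states, same initial state, all states final, over the digit alphabet $\{0,1,\dots,|A|-1\}$ (finite since $A$ is), and, for each state $q$ with outgoing letters $\ell_q(0)<\ell_q(1)<\cdots$, a transition $q\xrightarrow{j}\delta(q,\ell_q(j))$ for every admissible $j$. This is a finite deterministic automaton, so $L(\mathcal{A}')$ is rational, and the remaining task is the identification $L(\mathcal{A}')=L_{(\signature,\labelling[\mu])}$. For one inclusion: given a node $n$ with root path $0=n_0,n_1,\dots,n_k=n$, where $n_{i+1}$ is the $(j_i{+}1)$-th child of $n_i$, the $\labelling$-word of this path is $a_0\cdots a_{k-1}$ with $a_i$ the $(j_i{+}1)$-th smallest label leaving $n_i$, while its $\labelling[\mu]$-word is $j_0\cdots j_{k-1}$; reading $a_0\cdots a_{k-1}$ in $\mathcal{A}$ visits states $q_0,\dots,q_k$ with $\ell_{q_i}(j_i)=a_i$ by the fact recorded above, so $\mathcal{A}'$ reads $j_0\cdots j_{k-1}$ along the same $q_0,\dots,q_k$ and accepts it. Conversely, a word accepted by $\mathcal{A}'$ visits states $q_0,\dots,q_k$, each with $\ell_{q_i}(j_i)$ defined; following the $(j_i{+}1)$-th child at each step reconstructs a path of $T_{\signature}$ whose $\labelling[\mu]$-word is $j_0\cdots j_{k-1}$, so that word lies in $L_{(\signature,\labelling[\mu])}$. (The root contributes the empty word to both languages, so it needs no special care.)

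The hard part --- really the only thing beyond bookkeeping --- is the compatibility used in the first step: the recoding is performed \emph{per state} of $\mathcal{A}$, whereas the passage $\labelling\mapsto\labelling[\mu]$ is defined \emph{per node} of $T_{\signature}$, and many nodes may be merged into a single state. This is harmless precisely because two nodes with the same residual in $L_{(\signature,\labelling)}$ carry the same increasingly-ordered list of outgoing labels --- the $i$-th being the $i$-th smallest first letter of the common residual --- so the ``label $\mapsto$ rank among siblings'' map factors through $\mathcal{A}$; I would make sure to state and use exactly this, and to check that the two-sided identification above is in fact a length- and radix-order-preserving bijection between $L_{(\signature,\labelling)}$ and $L_{(\signature,\labelling[\mu])}$, which is the conceptual reason the argument works.
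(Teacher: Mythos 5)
Your construction is exactly the one in the paper: starting from a trim deterministic automaton for $L_{(\signature,\labelling)}$, relabel the outgoing transitions of each state by their rank among the state's outgoing letters, and identify the unfolding of the new automaton with the tree labelled by $\labelling[\mu]$. The paper compresses the identification into a one-line ``unfolding'' remark where you spell out both inclusions and the per-state/per-node compatibility, but the approach and the key observation (nodes reaching the same state share the same ordered set of outgoing labels, since these are the first letters of the common residual) are the same.
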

\begin{proof}
  Given the finite deterministic trim 
    automaton~$\Ac=\aut{A,Q,\delta,i}$ accepting~$L_{(\signature,\labelling)}$,
    let us consider the 
    automaton~$\Bc=\aut{B,Q,\delta',i}$ where
  \begin{itemize}
    \item $B = \intint{0}{k}$  ~~~ with~$k = \vmcard{A}$
    \item $p \pathx{i}[\Bc] q$~~~iff~~~$p \pathx{b}[\Ac] q$ and there exists exactly~$i$ letters $a$ of $A$ such that
    \begin{itemize}
      \item $a<b$ 
      \item $p\pathx{a}[\Ac]q'$ for some state~$q'$
    \end{itemize}
  \end{itemize}
  Intuitively, one has to change the labels of the outgoings transitions of every 
    states by the smallest possible (in~$\intint{0}{k}$) without modifying their relative order.
  For instance if a state~$p$ of~$\Ac$ has three outgoings transitions labelled by $a$,~$c$ and~$d$;
    then in the automaton~$\Bc$, the same state $p$ would have the same transitions but now respectively labelled by~$0$,~$1$ and~$2$ (provided that the order of~$A$ is~${a<c<d}$). 
  See \rfigure{minlab}, below, for an example.  
  Unfolding automata~$\Ac$ and~$\Bc$ into infinite labelled trees yields the statement.
  \begin{figure}[ht!]
    \hfill
    \subfloat[An automaton~$\Ac$]{\includegraphics[width=0.35\linewidth]{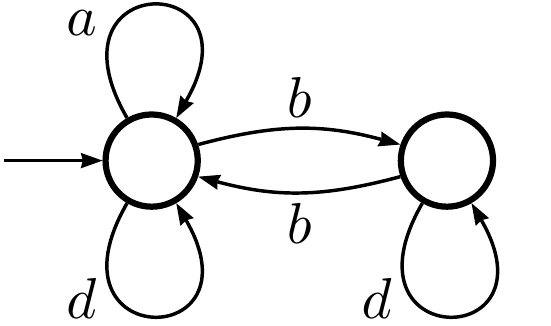}}\hfill
    \subfloat[The respective automaton~$\Bc$]{\hspace*{2cm}\includegraphics[width=0.35\linewidth]{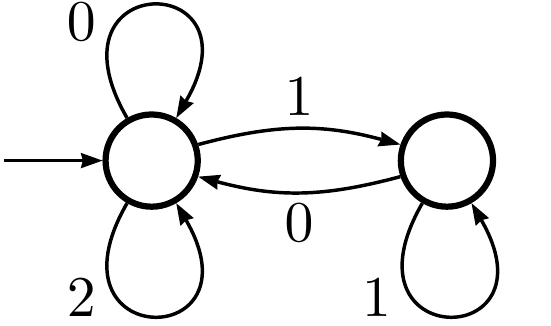}}
    \hfill\hspace*{0cm}
    \caption{Minimal labelling}
    \lfigure{minlab}
  \end{figure}
\end{proof}

\begin{remark}\lremark{lab->complexity}
  It should be noted that even if a signature produces a really simple tree
  (such as the infinite unary tree), one can always choose a labelling in order to
  produce an artificially complex language 
  (such as the infinite word where the~$i$-th letter is a~$1$ if 
  the~$i$-th Turing machine stops on the empty word).
  
  This is why positive results relative to the regularity of a language defined
    by signature will always require some restriction on the labelling.
  It usually amounts to ensure that signature and labelling are generated 
    in similar fashions.
  For instance, it will be the case for 
    \emph{substitutive labelled signature} defined in the next
    \rsection{subst}.
\end{remark}

\section{Substitutive signature and rational languages}\lsection{subst}
  
  The purpose of this section is to establish a relationship between
    substitutive sequences and rational languages.
  Let us first consider the Fibonacci 
    word~$\sigma^\omega(0)$ where~$\sigma(0)=01$ and~$\sigma(1)=0$:
  $$ \sigma^\omega(0) \e = \e 0\xmd1\xmd0\xmd0\xmd1\xmd0\xmd1\xmd0\xmd0\xmd1\xmd0\xmd0\xmd1 \cdots$$
  
  This word, however, is not valid when considered as a signature. 
  We build a valid signature~$\signature$ by replacing the 0's in the Fibonacci word by 2's:
  
  $$ \signature \e = \e 2\xmd1\xmd2\xmd2\xmd1\xmd2\xmd1\xmd2\xmd2\xmd1\xmd2\xmd2\xmd1 \cdots$$
  
  It should be noted that the labelling~$\labelling=\sigma^\omega(0)$ is valid w.r. 
    to~$\signature$: each letter~`2' (resp~`1') of~$\signature$ is associated to the word~`$01$' (resp~`$0$') 
    of~$\labelling$.
  
  The language~$L_{(\signature,\labelling)}$ shown at \rfigure{fibo} is then exactly
    the integer representations in the
    Fibonacci numeration system (sometimes called 
    Zeckendorf numeration system) 
    that is, the rational language~$1\set{0,1}^* \setminus (\set{0,1}^* 11 \set{0,1}^*)$.
  \begin{figure}
    \includegraphics[width=\linewidth]{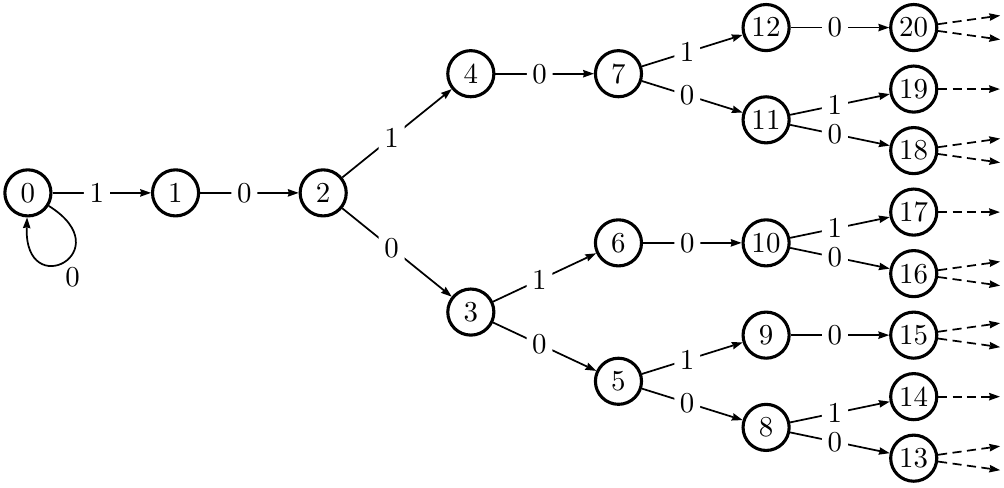}
    \caption{The language of the integer representations in the Fibonacci number system}
    \lfigure{fibo}
  \end{figure}

\subsection{Substitutive sequences and substitutive signatures}
  We recall here some basic definition from combinatorics on words; 
  we essentially use the terminology of~\cite{BertRigo10Ed}.
  
  Given an alphabet~$A$, 
    we say that an endomorphism~$\sigma:A^* \rightarrow A^*$ 
    is \emph{prolongable} on~$a\in A$ if there exists a word~$u$ of~$A^*$
    such that~$ \sigma(a)=au$ and that~$lim_{n\rightarrow+\infty} \wlen{\sigma^n(a)} = +\infty$.
%
  In this context, the sequence~$\sigma^n(a)$ converges (for the usual topology) 
    to an infinite sequence denoted by~$\sigma^\omega(a)$.
  Any sequence resulting from the iteration of a prolongable endomorphism 
    (that is, of the form~$\sigma^\omega(a)$ where~$\sigma$ is prolongable on~$a$) 
    is said to be \emph{purely substitutive}.
    
  A letter-to-letter morphism is called a \emph{coding}%
\footnote{Note that a \emph{coding does not define a code}, in the sense of \cite{BersEtAll2009}.}. 
  The image~$f(w)$ of a purely substitutive sequence~$w$ by a 
    morphism~$f$ is called an 
    \emph{HD0L} sequence;
    if furthermore,~$f$ is a coding,~$f(w)$ is called a \emph{substitutive sequence}.

\medskip

  
  We will now define particular 
    substitutive sequences and consider them as signatures.
  Given an endomorphism~$\sigma:A^*\rightarrow A^*$ prolongable on a 
    letter~${a\in A}$, we denote by~$f_\sigma$ the following 
    coding entirely defined by~$\sigma$:
  $$ \forall a\in A \quantsp f_\sigma(a) = \wlen{\sigma(a)} \eqpnt$$
  We call the substitutive sequence~$f_\sigma(\sigma^\omega(a))$ a \emph{substitutive signature}.

  \begin{lemma}\llemma{substsign->valid}
    Every substitutive signature is valid.
  \end{lemma}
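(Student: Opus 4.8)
The plan is to verify the validity condition \eqref{eq.validsignature} directly, namely that for every $j\in\N$ the partial sum $\sum_{i=0}^{j} s_i$ is strictly greater than $j+1$, where $\signature=f_\sigma(\sigma^\omega(a))$ for some endomorphism $\sigma$ prolongable on $a$. The key observation is that the partial sums of a substitutive signature have a clean combinatorial meaning: if $w=\sigma^\omega(a)=w_0 w_1 \cdots$ and $s_i = f_\sigma(w_i) = \wlen{\sigma(w_i)}$, then $\sum_{i=0}^{j} s_i = \wlen{\sigma(w_0 w_1 \cdots w_j)}$, i.e.\ the length of the image under $\sigma$ of the length-$(j+1)$ prefix of $w$. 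So the whole statement reduces to: the $\sigma$-image of the prefix of $w$ of length $j+1$ has length strictly greater than $j+1$.

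First I would fix notation and record the prefix identity above; it is immediate since $\sigma$ is a morphism, so $\sigma$ applied to a concatenation is the concatenation of the images. Next I would use prolongability: since $\sigma(a)=au$ with $w$ beginning with $a$, and (more importantly) since $w$ is a fixed point of $\sigma$, we have $\sigma(w)=w$; restricting to prefixes, if $p_{j+1}$ denotes the prefix of $w$ of length $j+1$, then $\sigma(p_{j+1})$ is a prefix of $w$ (because $\sigma(p_{j+1})$ is a prefix of $\sigma(w)=w$). So $\sum_{i=0}^{j}s_i = \wlen{\sigma(p_{j+1})}$ is itself the length of some prefix of $w$, and I need that this length is at least $j+2$.

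The heart of the argument is therefore the strict inequality $\wlen{\sigma(p_{j+1})} \ge \wlen{p_{j+1}} + 1$. For this I would argue that $\sigma$ is non-erasing on the letters actually occurring in $w$: each such letter $b$ satisfies $\wlen{\sigma(b)}\ge 1$, since a letter with $\sigma(b)=\varepsilon$ occurring in $w=\sigma^\omega(a)$ would force $w$ to be finite, contradicting $\lim_n \wlen{\sigma^n(a)}=+\infty$ (equivalently, such a letter could never appear in a fixed point whose prefixes grow). Hence $\wlen{\sigma(p_{j+1})} \ge \wlen{p_{j+1}} = j+1$ for every $j$. To upgrade $\ge j+1$ to $\ge j+2$ I use the prolongability of $\sigma$ on the first letter $a$ of $w$: since $\sigma(a)=au$ with $\wlen{\sigma^n(a)}\to\infty$, the word $u$ is nonempty (otherwise $\sigma(a)=a$ and all $\sigma^n(a)=a$, contradicting the limit), so already $s_0=\wlen{\sigma(a)}\ge 2$. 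Combining $s_0\ge 2$ with $s_i\ge 1$ for all $i\ge 1$ gives $\sum_{i=0}^{j}s_i \ge (j) + 2 = j+2 > j+1$, which is exactly \eqref{eq.validsignature}.

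The only delicate point — and the step I would flag as the main obstacle — is justifying cleanly that every letter appearing in the fixed point $w$ is non-erasing, and doing so without circularity (the definition of $\sigma^\omega(a)$ as the limit of $\sigma^n(a)$ already presupposes these lengths go to infinity). The cleanest route is: $w$ is a fixed point of $\sigma$, so if some letter $b$ occurring at position $k$ in $w$ had $\sigma(b)=\varepsilon$, then comparing $w$ with $\sigma(w)$ position by position shows the image of the length-$(k+1)$ prefix has length at most $k$, yet it must equal a prefix of $w$ of that image-length while also the tail contributions cannot recover the lost letter in a way consistent with $\sigma(w)=w$ at all later positions — more simply, one shows by induction that $\wlen{\sigma^n(a)}$ would then be bounded, contradicting prolongability. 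I would present this as a short lemma (``a prolongable morphism is non-erasing on the alphabet of its fixed point'') and then the validity proof is the two-line length count above.
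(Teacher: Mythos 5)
Your reduction of the statement to the inequality $\wlen{\sigma(v)} > \wlen{v}$ for every nonempty prefix $v$ of the fixed point $w=\sigma^\omega(a)$ is exactly the paper's first step, and the observation $s_0=\wlen{\sigma(a)}\ge 2$ is correct. The proof then breaks at the step you yourself flag as the main obstacle: the auxiliary lemma ``a prolongable morphism is non-erasing on the alphabet of its fixed point'' is false. Take $A=\{a,b\}$ with $\sigma(a)=aba$ and $\sigma(b)=\epsilon$. Then $\sigma(a)=a\cdot ba$ and $\wlen{\sigma^n(a)}=3\cdot 2^{n-1}\rightarrow+\infty$, so $\sigma$ is prolongable on $a$; its fixed point $w=(aba)^\omega$ contains the letter $b$, yet $\sigma(b)=\epsilon$, so $s_1=0$ and your claim that $s_i\ge 1$ for all $i\ge 1$ fails. (The lemma itself still holds here --- the signature $3\,0\,3\,3\,0\,3\cdots$ is valid --- but not by your count.) Nothing in the paper's definition of prolongability excludes erasing letters, and zero entries in a signature are meaningful (they are the leaves of the tree), so non-erasingness cannot simply be added as a standing hypothesis.

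The paper closes exactly this gap with a global rather than letter-by-letter argument: given a nonempty prefix $v$ of $w$, choose $i$ such that $\sigma^i(a)\sqsubseteq v\sqsubset\sigma^{i+1}(a)$, which is possible because the words $\sigma^n(a)$ form a chain of prefixes of $w$ whose lengths tend to infinity. Applying $\sigma$ preserves the prefix relation, so $\sigma^{i+1}(a)\sqsubseteq\sigma(v)$, whence $\wlen{\sigma(v)}\ge\wlen{\sigma^{i+1}(a)}>\wlen{v}$, the last inequality because $v$ is a proper prefix of $\sigma^{i+1}(a)$. This sandwiching between consecutive iterates uses only prolongability and is insensitive to erased letters; you should replace your non-erasing lemma by it, after which your length count goes through unchanged.
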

  \begin{proof} 
  It amounts to prove that for all prefixes~$u$ 
    of~$f_\sigma(\sigma^\omega(a))$, the sum of the letters of~$u$ 
    is strictly greater than
    the length of~$u$.
  Hence, from the definition of~$f_\sigma$, that for all prefixes~$v$ 
    of~$\sigma^\omega(a)$,~$\wlen{\sigma(v)}>\wlen{v}$.
  
  Let $v$ be any prefix of~$\sigma^\omega(a)$.
  Since~$\sigma$ is prolongable on the letter~$a$, 
    there is an integer~$i$ such 
    that~$\sigma^i(a) \sqsubseteq v \sqsubset \sigma^{(i+1)}(a)$;
    hence $\sigma^{(i+1)}(a) \sqsubseteq \sigma(v) \sqsubset \sigma^{(i+2)}(a)$,
    hence~$\wlen{\sigma(v)} \geq \wlen{\sigma^{(i+1)}(a)} > \wlen{v}$.

  \end{proof}
  \begin{definition}
  A labelled signature~$(\signature,\labelling)$ is \emph{substitutive} if 
    \begin{itemize}
        \item $\signature$ is a substitutive signature~$f_\sigma(\sigma^\omega(a))$ and
        \item $\labelling$ is of the form~$g(\sigma^\omega(a))$ where~$g:A^* \rightarrow B^*$ and for all letters~$a \in A$, $\wlen{g(a)}=\wlen{\sigma(a)}$.~\footnote{A substitutive labelling is then a particular HD0L sequence.}
    \end{itemize}
  \end{definition}
  
  \noindent The next lemma follows; its proof is essentially the same as the one of~\rlemma{substsign->valid}.
  \begin{lemma}
    Every substitutive labelled signature is valid.
  \end{lemma}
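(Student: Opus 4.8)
The plan is to unwind the two definitions in play. By definition, a labelled signature $(\signature,\labelling)$ is valid precisely when $\signature$ is a valid signature \emph{and} $\labelling$ is a valid labelling with respect to $\signature$. Since a substitutive labelled signature has, by definition, a substitutive signature $\signature = f_\sigma(\sigma^\omega(a))$, the first half is nothing but \rlemma{substsign->valid}, so there is nothing new to prove there. All the content is therefore in producing, from the substitutive structure of $\labelling = g(\sigma^\omega(a))$, a family $\left\lbrace w_k \right\rbrace_{k\in\N}$ witnessing the validity of $\labelling$ w.r.t.\ $\signature$, i.e.\ satisfying conditions (1)--(3) of the definition of a valid labelling.

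First I would write the fixed point letter by letter, $\sigma^\omega(a) = x_0\,x_1\,x_2\cdots$ with each $x_i\in A$, and set $w_k = g(x_k)$. Condition (1) is then immediate: $g$ being a morphism, $\labelling = g(\sigma^\omega(a)) = g(x_0)\,g(x_1)\cdots = w_0\,w_1\cdots$. Condition (2) is exactly where the two length hypotheses meet: on one hand $s_k$ is the $k$-th letter of $\signature = f_\sigma(\sigma^\omega(a))$, that is $s_k = f_\sigma(x_k) = \wlen{\sigma(x_k)}$ by the very definition of $f_\sigma$; on the other hand $\wlen{w_k} = \wlen{g(x_k)} = \wlen{\sigma(x_k)}$ by the defining requirement on $g$. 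Hence $\wlen{w_k} = s_k$ for every $k$. Condition (3), that the letters inside each block $w_k = g(x_k)$ are in strictly increasing order, holds because each $g(a)$ is required to be such a word — this is the one place where the restriction imposed on $\labelling$ in the definition of a substitutive labelled signature, beyond the mere block-length matching, is actually used.

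This is the sense in which the proof is \emph{essentially the same} as that of \rlemma{substsign->valid}: in both arguments the substitutive machinery is invoked only to align, block by block, the structure of $\sigma^\omega(a)$ with the object at hand (there the prefixes of $\sigma^\omega(a)$, here the blocks $w_k$), after which the required fact — there $\wlen{\sigma(v)} > \wlen{v}$ for prefixes $v$, here $\wlen{w_k} = s_k$ — is read off directly from the identity $f_\sigma(a) = \wlen{\sigma(a)}$. The only point that needs care, and the one I would expect to be the sole (minor) obstacle, is to observe that the decomposition $\labelling = w_0 w_1\cdots$ one writes down is forced: the lengths $s_k$ leave no freedom in how $\labelling$ is cut, so that checking (3) genuinely relies on the hypothesis on the blocks $g(a)$ and is not obtained for free.
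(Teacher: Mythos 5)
Your proof is correct and matches the paper's intent: the paper gives no actual argument here, stating only that the proof is ``essentially the same'' as that of the preceding lemma (every substitutive signature is valid), and your block decomposition $w_k = g(x_k)$ with the computation $\wlen{w_k}=\wlen{g(x_k)}=\wlen{\sigma(x_k)}=f_\sigma(x_k)=s_k$ is exactly the argument being alluded to. Your explicit treatment of condition (3) --- observing that it requires each $g(b)$ to be a word with strictly increasing letters, a requirement the paper's definition of a substitutive labelled signature records only as a length condition --- is if anything more careful than the original.
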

%

  We open now a parenthesis about ultimately periodic signature. 
  Let~$\signature = u v^{\omega}$ be an ultimately periodic sequence over
    the alphabet~$\intint{0}{k-1}$; we
    call \emph{growth ratio} of~$v$, denoted~$gr(v)$, the average of the letters of~$v$:
    $$ gr(v)\e=\e \frac{\sum_{i=0}^{\wlen{v}-1} v[i]}{\wlen{v}} \eqpnt $$
  The next proposition states that whenever~$gr(v)$ is an integer that is, 
    when the sum of the letters of~$v$ is a multiple of the length of~$v$,
    any signature of the form~$uv^\omega$ is substitutive.
  \begin{proposition}\lproposition{gr-int->subst}
    Given an ultimately periodic (valid) signature~$\signature = u v^{\omega}$  
      whose growth ratio is an integer then~$\signature$ is a substitutive signature.
  \end{proposition}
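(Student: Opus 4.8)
The plan is to realise $\signature=u\,v^{\omega}$ in the form $f_\sigma\bigl(\sigma^{\omega}(c)\bigr)$ for an endomorphism $\sigma$ obtained by ``folding'' $\signature$ onto a finite alphabet. Put $\ell=\wlen u$, $m=\wlen v$ and $d=gr(v)$; note that $d\ge 1$, since $d=0$ would force $\signature$ to be eventually $0^{\omega}$, hence invalid. Take the alphabet $B=\{b_0,\dots ,b_{\ell-1}\}\cup\{a_0,\dots ,a_{m-1}\}$ and consider the infinite word
\[
  W\;=\;b_0\,b_1\cdots b_{\ell-1}\,\bigl(a_0\,a_1\cdots a_{m-1}\bigr)^{\omega}
\]
over $B$, which I want to make the fixed point of $\sigma$. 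The reason for this labelling is that the letter-to-letter map $h$ sending $b_i\mapsto u[i]$ and $a_j\mapsto v[j]$ satisfies $h(W)=u\,v^{\omega}=\signature$, so it will be enough to produce $\sigma$ with $\sigma^{\omega}(c)=W$ and with block lengths $\wlen{\sigma(b_i)}=u[i]$, $\wlen{\sigma(a_j)}=v[j]$, which forces $f_\sigma=h$.

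First I would define $\sigma$ by cutting $W$ itself into consecutive factors whose lengths are prescribed by $\signature$: letting $S=\sum_{i<\ell}u[i]$ (so $S\ge\ell$, by validity of $\signature$ at index $\ell-1$, and trivially when $\ell=0$), I set $\sigma(b_i)$ to be the factor of $W$ of length $u[i]$ starting at position $\sum_{j<i}u[j]$, and $\sigma(a_j)$ to be the factor of $W$ of length $v[j]$ starting at position $S+\sum_{i<j}v[i]$. The first thing to check is then $\sigma(W)=W$. By construction $\sigma(b_0\cdots b_{\ell-1})$ is the length-$S$ prefix of $W$, while $\sigma(a_0\cdots a_{m-1})$ is the factor of $W$ of length $\sum_i v[i]=dm$ starting at position $S$. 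Since $S\ge\ell$, the suffix of $W$ from position $S$ on is purely $m$-periodic; and because $m\mid dm$ that factor is one of its periods, so $\sigma\bigl((a_0\cdots a_{m-1})^{\omega}\bigr)$ equals that suffix, whence $\sigma(W)=W$.

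Next, $\sigma$ is prolongable on $c:=b_0$ (read $c:=a_0$ when $\ell=0$): since $u[0]=s_0\ge 2$ by validity, $\sigma(c)$ begins with $c$ and has length at least $2$; moreover each $\sigma^{n}(c)$ is a prefix of $\sigma^{n}(W)=W$ and the lengths $\wlen{\sigma^{n}(c)}$ are non-decreasing, so were they bounded, $\sigma$ would fix a non-empty prefix $w$ of $W$, giving $\wlen w=\wlen{\sigma(w)}=\sum_{i<\wlen w}s_i$ (because $\wlen{\sigma(x)}=h(x)$ for every letter $x$, and the $h$-image of the first $\wlen w$ letters of $W$ is the corresponding prefix of $\signature$), which contradicts validity of $\signature$ at index $\wlen w-1$. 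Hence $\wlen{\sigma^{n}(c)}\to\infty$ and $\sigma^{\omega}(c)=W$. Finally $f_\sigma=h$ by the choice of the block lengths, so $f_\sigma\bigl(\sigma^{\omega}(c)\bigr)=h(W)=\signature$, which exhibits $\signature$ as a substitutive signature.

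I expect the only genuine work to lie in the step $\sigma(W)=W$: one must verify that the factors defining $\sigma$ fit together so that the equality holds on the nose, the delicate part being the preservation of the $m$-periodic tail of $W$. This is exactly where the hypothesis $gr(v)\in\N$ is used, via $m\mid dm$, and where the validity of $\signature$ is used a second time, via $S\ge\ell$; the remaining arguments (prolongability and the identification $f_\sigma=h$) are bookkeeping, with validity invoked once more to rule out a length plateau of $\sigma^{n}(c)$.
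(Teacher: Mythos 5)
Your construction is exactly the one in the paper: the alphabet indexes the positions of $u$ and of $v$, the word $W=b_0\cdots b_{\ell-1}(a_0\cdots a_{m-1})^{\omega}$ plays the role of the paper's $(b_0\cdots b_{n-1})(a_0\cdots a_{k-1})^{\omega}$, and $\sigma$ is defined by cutting $W$ into consecutive blocks of lengths $s_0,s_1,\dots$, with integrality of the growth ratio used in the same place (preservation of the periodic tail, via $m\mid dm$). The argument is correct, and if anything slightly more complete than the paper's, since you also verify prolongability and the fixed-point property explicitly.
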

  \begin{proof} 
  First, let us consider the case where~$u=\epsilon$.  
  We denote by~$k$ the length of~$v$:~$k= \wlen{v}$;  
    and consider the alphabet~${A=\intint{0}{k-1}}$.
  In the following, any letter (for instance of the form~$(j+h)$ for some integers~$j$ and~$h$) will be taken
    in~$\Z/k\Z$, hence will belong to~$A$.
  We define the endomorphism~$\sigma:\Ae\rightarrow\Ae$ by
  $$\begin{array}{ll}
  \sigma(0) = 0 \xmd 1 \xmd \cdots (v[0]-1)  &\text{and} \\
  \sigma(i) = (j+1)\xmd (j+2) \cdots \xmd (j+v[i]) \ee &\text{where } j\text{ is the last letter of }\sigma(i-1).
  \end{array}$$
  
  
  Let us now prove that~$\sigma(0\xmd 1 \xmd \cdots \xmd (k-1)) = (0\xmd 1 \xmd \cdots \xmd (k-1))^j$,
    where~$j$ is the growth ratio of~$v$.
  It is quite easy to see that~$\sigma(0\xmd 1 \xmd \cdots \xmd (k-1))$ is a prefix 
    of~$(0\xmd 1 \xmd \cdots \xmd (k-1))^\omega$ since~$\sigma(i+1)$ starts with the 
    letter directly following the last letter of~$\sigma(i)$.
  Since by definition, the length of~$\sigma(i)$ is equal to~$v[i]$ then $$\wlen{\sigma(0\xmd 1 \xmd \cdots \xmd (k-1))}=\sum_{i=0}^{\wlen{v}-1} v[i] = j \times \wlen{v} = j \times k$$ yielding the claim.
  
  It follows that $\sigma^\omega(0\xmd 1 \xmd \cdots \xmd (k-1))$ is equal 
    to~$(0\xmd 1 \xmd \cdots \xmd (k-1))^\omega$.
  With a similar reasoning one can prove 
    that~$\sigma^\omega(0)$ is also equal to~$(0\xmd 1 \xmd \cdots \xmd (k-1))^\omega$.
%
%
  Finally, since for all~$i\in\intint{0}{k-1}$,~$\wlen{\sigma(i)} = v[i]$,
    it follows that~$f_\sigma(0\xmd 1 \xmd \cdots \xmd (k-1)) = v$ and~$f_{\sigma}(\sigma^\omega(0))= f_\sigma((0\xmd 1 \xmd \cdots \xmd (k-1))^\omega) = v^\omega$, concluding the special case~$u=\epsilon$.
  \smallskip
  
  We no longer assume that~$u=\epsilon$ and then denote by~$n$ the length of~$u$.
  The new alphabet of the morphism is~$C = A\uplus B$, where~$B$ is of 
    cardinal~$n$, each letter corresponding to a position in~$u$.
  We denote the letters by : 
    $$ B=\set{b_0, b_1, \ldots, b_{(n-1)}} \ee \text{and} \ee A=\set{a_0,a_1,\ldots,a_{(k-1)}}$$
  The images of the letters of~$B$ by~$\sigma$ are defined inductively:
  for every integer~$i$,~$\sigma(b_0\xmd b_1 \cdots b_i)$ is the prefix of the sequence~$(b_0\xmd b_1 \cdots b_{n-1})(a_0\xmd a_1 \cdots a_{(k-1)})^\omega$ such that~$\wlen{\sigma(b_i)} = u[i]$.
  
  Since the signature~$uv^\omega$ is valid by hypothesis, the last letter of~$\sigma(b_{n-1})$ is some letter of~$A$;
  the image by~$\sigma$ of the letters of~$A$ are then:
    $$\begin{array}{ll}
  \sigma(a_0) = a_{(j+1)} \xmd a_{(j+2)} \xmd \cdots a_{(j+v[0])}  &\text{where } a_j\text{ is the last letter of }\sigma(b_{(n-1)}). \\
  \sigma(a_i) = a_{(j+1)} \xmd a_{(j+2)} \xmd \cdots a_{(j+v[i])} \ee &\text{where } a_j\text{ is the last letter of }\sigma(a_{(i-1)}).
  \end{array}$$
  From here, the proof is the analogous to the case where~$u=\epsilon$.
%
%
%
  
  \end{proof}
 
  \begin{remark}
    It should be noted that whenever the growth ratio of an ultimately periodic signature
      is not an integer, it is never substitutive.
    The proof of this statement is however convoluted and is the subject of 
      another article in preparation \cite{MarsSaka2014a}.
      
    It should also be noted that ultimately periodic signatures 
      are, as words, \emph{purely substitutive} no matter the growth ratio.
    For instance the word~$(21)^\omega$ is equal 
      to~$\sigma^\omega(2)$ where~$\sigma(2)=\sigma(1)=21$.
    It illustrates the fact that the set of purely substitutive 
      sequences is not included in the set of substitutive signatures.
      
  \end{remark}


  \subsection{Rational languages and substitutive signatures}
  
  \begin{theorem}\ltheorem{subt<->rat}
    A prefix-closed language is rational if and only if its labelled signature is a substitutive signature.
  \end{theorem}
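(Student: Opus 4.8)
The plan is to make explicit, in both directions, the dictionary between a substitutive labelled signature and a finite deterministic automaton — exactly the automaton/substitution correspondence in the lineage of Maes--Rigo \cite{RigoMaes2002} and Cobham \cite{Cobham1972} recalled in the introduction. I work throughout with \itrees, and I assume $L$ infinite (a finite prefix-closed language has no infinite tree). The one recurring subtlety is that the root behaves, with respect to its first child, differently from an ordinary node carrying the ``same state'', and I deal with it uniformly by giving the root a dedicated state. Recall finally that, by the lemma that every substitutive labelled signature is valid, in $(\signature,\labelling)=(f_\sigma(\sigma^\omega(a)),g(\sigma^\omega(a)))$ every word $g(c)$ has strictly increasing letters; and that, writing $\sigma^\omega(a)=c_0c_1c_2\cdots$, the fixed-point identity gives $\sigma^\omega(a)=\sigma(c_0)\sigma(c_1)\cdots$ with $\wlen{\sigma(c_k)}=\wlen{g(c_k)}=s_k$, so that $g(c_0)g(c_1)\cdots$ is precisely the decomposition $\labelling=w_0w_1\cdots$ of \rsection{language}.

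For the direction ``$(\signature,\labelling)$ substitutive $\Rightarrow$ $L_{(\signature,\labelling)}$ rational'', I would build the automaton $\Ac$ over the alphabet $B$ of $\labelling$, with state set $A\uplus\set{\widehat a}$ ($\widehat a$ a fresh \emph{root-state}), initial state $\widehat a$, every state final, and transitions: for each $c\in A$ and each $j\in\intint{0}{s_c-1}$ (with $s_c=\wlen{\sigma(c)}$) a transition from $c$ to $\sigma(c)[j]$ labelled $g(c)[j]$; and from $\widehat a$ the transitions to $\sigma(a)[j]$ labelled $g(a)[j]$ for $j\in\intint{1}{s_a-1}$ only, omitting $j=0$, which is the would-be self-loop of the root. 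Since the letters of each $g(c)$ strictly increase, $\Ac$ is deterministic, and it is plainly finite. Unfolding $\Ac$ from $\widehat a$ and reading it level by level against the breadth-first generation of the labelled tree $T_{(\signature,\labelling)}$ of $(\signature,\labelling)$: the block decomposition $\sigma^\omega(a)=\sigma(c_0)\sigma(c_1)\cdots$ gives, by induction on the levels, that for every $n\geq1$ the $n$-th node carries state $c_n$, its ordered children are the letters of $\sigma(c_n)$, and the edge to its child of index $j$ is labelled $g(c_n)[j]$; the root carries $\widehat a$, and omitting the $j=0$ transition there accounts exactly for the phantom self-loop. Hence this unfolding is $T_{(\signature,\labelling)}$, and, as every state is final, $\behav{\Ac}=\set{\cod{m}_{(\signature,\labelling)}~|~m\in\N}=L_{(\signature,\labelling)}$, which is therefore rational.

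For the converse, let $L$ be an infinite rational prefix-closed language, $\Ac=\aut{B,Q,\delta,i}$ a trim deterministic automaton for it with every state final, and add a fresh initial root-state $\widehat\imath$ with a transition $\widehat\imath\to\delta(i,b)$ labelled $b$ for each $b$ on which $\delta(i,\cdot)$ is defined; then $\widehat\imath$ is the target of no transition, so only the node $0$ of the breadth-first traversal of the tree $T_L$ underlying $L$ carries $\widehat\imath$. For $q\in Q$ let $b_1<\dots<b_{d(q)}$ be the letters on which $\delta(q,\cdot)$ is defined, and set $\tau(\widehat\imath)=\widehat\imath\,\delta(i,b_1)\cdots\delta(i,b_{d(i)})$ and $\tau(q)=\delta(q,b_1)\cdots\delta(q,b_{d(q)})$ for $q\in Q$. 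Then $\tau$ is prolongable on $\widehat\imath$, $\wlen{\tau^n(\widehat\imath)}\to\infty$ because $T_L$ is infinite, and — since $\widehat\imath$ does not recur — a direct induction yields $\tau^n(\widehat\imath)=\widehat\imath\cdot\ell_1\cdots\ell_n$ with $\ell_k$ the left-to-right list of states of the level-$k$ nodes; hence $\tau^\omega(\widehat\imath)=q_0q_1q_2\cdots$ is the sequence of states met along the breadth-first traversal of $T_L$. The $n$-th node then has $\wlen{\tau(q_n)}$ children — here the root-state is what matters, since $\wlen{\tau(\widehat\imath)}=d(i)+1=s_0$ whereas $\wlen{\tau(q)}=d(q)=s_n$ for $q=q_n\neq\widehat\imath$ — so $\signature=f_\tau(\tau^\omega(\widehat\imath))$. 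Finally put $g(\widehat\imath)=\bot\,b_1\cdots b_{d(i)}$, with $\bot$ a new smallest letter carrying the phantom loop, and $g(q)=b_1\cdots b_{d(q)}$ for $q\in Q$; then $\wlen{g(q)}=\wlen{\tau(q)}$ and $g(\tau^\omega(\widehat\imath))=g(q_0)g(q_1)\cdots=w_0w_1\cdots=\labelling$, so $(\signature,\labelling)$ is a substitutive labelled signature.

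The main obstacle is not a single estimate but keeping the root's self-loop coherent inside the \itree formalism: it is what forces the dedicated root-state in both directions — so that the signature reads off the morphism uniformly, including at position $0$, and so that the root is never re-entered — together with the resulting ``phantom'' first letter of $w_0$ in the labelling. Once this is in place, everything reduces to the single fact that $\sigma^\omega(a)$ is the fixed point of $\sigma$, so that its factorisation into the blocks $\sigma(c_0),\sigma(c_1),\dots$ mirrors, level by level, the father relation of the generated tree — which is precisely the automaton/substitution correspondence of \cite{RigoMaes2002,Cobham1972}, and makes the result constructive in both directions.
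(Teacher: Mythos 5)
Your proof is correct and takes essentially the same route as the paper: both directions instantiate the same automaton/substitution dictionary (states of the DFA $\leftrightarrow$ letters of the substitution alphabet), with the breadth-first levels of the tree matched by induction against the fixed-point factorisation $\sigma^\omega(a)=\sigma(c_0)\sigma(c_1)\cdots$. The only divergence is in how the root's phantom self-loop is neutralised --- you use a dedicated root state and a fresh smallest letter $\bot$, whereas the paper intersects $L(\Ac)$ with $(B\setminus x)\cdot B^{*}$ in one direction and passes to the minimal automaton of $\#^{*}L$ in the other --- which is the same device in a different dressing.
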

  
  The proof of this theorem relies on a transformation from finite automaton 
    to substitutive word used by Rigo and Maes in \cite{RigoMaes2002} 
    (\cf also \cite[Section~3.4]{LecoRigo10hb}) to prove the equivalence between 
    two decision problems: 
    1- the ultimate periodicity in an abstract numeration system 
      (\cf \cite{LecoRigo01} or \cite{LecoRigo10hb}) 
    and 
    2- the ultimate periodicity problem of an HD0L word 
      (solved independently in \cite{Mitrofanov2011} and \cite{Durand2013a}).
    
  \begin{table}[ht!]
  \centering
  \begin{tabular}{|c|c|}
    \hline
         Automaton & Substitutive signature \\
         $\aut{\Sigma,Q,\delta,i}$ & $\signature=f_\sigma(\sigma^\omega(a))$~~$\labelling=g(\sigma^\omega(a))$\\
    \hline
           & $\sigma:A^*\rightarrow A^*$ \\
           & $g:~A\rightarrow B$ \\
       $Q$ & $A$ \\
       $i$ & $a$ \\
       $\Sigma$  & $B$ \\
      \multirow{2}{*}{$(b,x,c)\in \delta$} & the $k$-th letter of $\sigma(b)$ is~$c$ \\
                          & the $k$-th letter of $g(b)$ is~$x$\\
    \hline
  \end{tabular}
  \vspace*{0.1cm}
  \caption{Summary of the transformation~~DFA~$\leftrightarrow$ Substitutive signature}
  \ltable{sumup}
  \end{table}
  
  \begin{proposition}\lproposition{subst->rat}
    Given a valid substitutive signature~$(\signature,\labelling)$, the 
      language~$L_{(\signature,\labelling)}$ is a rational language.
  \end{proposition}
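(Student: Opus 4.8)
The plan is to realise $L_{(\signature,\labelling)}$ as the language accepted by a finite automaton whose state set is the alphabet~$A$ of the underlying substitution, following the translation between automata and substitutive signatures summarised above. Write the hypothesis as $\signature=f_\sigma(\sigma^\omega(a))$ and $\labelling=g(\sigma^\omega(a))$, where $\sigma$ is prolongable on~$a$, $f_\sigma(b)=\wlen{\sigma(b)}$, and $\wlen{g(b)}=\wlen{\sigma(b)}$ for every letter~$b$; set $Q=\sigma^\omega(a)=q_0\xmd q_1\xmd q_2\cdots$, an infinite word over~$A$, and let $T$ be the labelled tree generated by $(\signature,\labelling)$. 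I shall attach to the node~$n$ of~$T$ the \emph{state}~$q_n\in A$ and check that the local shape of~$T$ at a node is governed by that state alone.

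The only real computation is the fixed-point identity $Q=\sigma(Q)$, i.e.\ $Q=\sigma(q_0)\xmd\sigma(q_1)\xmd\sigma(q_2)\cdots$. Parsing $Q$ into consecutive blocks, its $n$-th block is $\sigma(q_n)$, of length $\wlen{\sigma(q_n)}=f_\sigma(q_n)=s_n$, hence occupying the positions $\intint{m_n}{m_n+s_n-1}$ with $m_n=\sum_{i=0}^{n-1}s_i$. But $m_n$ is exactly the value of the running index at the start of step~$n$ of the generation procedure of \rsection{sig->tree}, so in~$T$ the $j$-th child of node~$n$ is the node $m_n+j$, which therefore has state $q_{m_n+j}=\sigma(q_n)[j]$. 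The same parsing of $\labelling=g(q_0)\xmd g(q_1)\cdots$ --- its blocks align with those of~$\signature$ since $\wlen{g(q_n)}=\wlen{\sigma(q_n)}=s_n$ --- shows that the edge from node~$n$ to its $j$-th child carries the label $g(q_n)[j]$. In short: a node of state~$b$ has, listed in increasing order of label, exactly $\wlen{\sigma(b)}$ outgoing edges, the $j$-th of them labelled $g(b)[j]$ and leading to a child of state $\sigma(b)[j]$.

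Consequently the descendant subtree of a node of~$T$ depends only on its state, so there are at most $\vmcard{A}$ such subtrees. Concretely, let $\Ac$ be the automaton with state set~$A$, input alphabet the set of letters occurring in $\labelling$, a transition from~$b$ to $\sigma(b)[j]$ labelled $g(b)[j]$ for each letter~$b$ and each index $j<\wlen{\sigma(b)}$, and all states final (the language being prefix-closed); then the descendant language of a state-$b$ node of~$T$ is the language of $\Ac$ started in~$b$. The single delicate point --- the one I expect to cost the most care --- is the root: since $\sigma$ is prolongable, $\sigma(a)$ begins with~$a$, so the first transition out of~$a$ is a loop, and in~$T$ this loop realises the convention that the root~$0$ is its own child, the origin of the $+1$ in $s_0=\vmcard{\theta(0)}+1$ introduced in \rsection{reldeftree}. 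Since the words of $L_{(\signature,\labelling)}$ are the labels of the \emph{simple} root-paths of~$T$, that loop must be excluded, so $L_{(\signature,\labelling)}$ is the language of the automaton obtained from $\Ac$ by adding a fresh final initial state~$\imath$ carrying the transitions of~$a$ minus that loop. This automaton is finite, so $L_{(\signature,\labelling)}$ is rational.

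One can also argue without automata: the analysis above identifies every residual $w^{-1}L_{(\signature,\labelling)}$ as one of $\emptyset$ (when $w$ prefixes no word of the language), $L_{(\signature,\labelling)}$ itself (when $w=\cod{0}_{(\signature,\labelling)}=\epsilon$), or the descendant language of a state-$b$ node for some $b\in A$ (when $w=\cod{n}_{(\signature,\labelling)}$, $n\ge1$); that is at most $\vmcard{A}+2$ residuals, so $L_{(\signature,\labelling)}$ is rational by the Myhill--Nerode theorem.
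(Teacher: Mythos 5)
Your proof is correct and follows essentially the same route as the paper's: you build the very automaton of Table~\ref{tab.sumup} (states~$A$, transitions $b\to\sigma(b)[j]$ labelled $g(b)[j]$, all states final) and handle the root loop coming from prolongability exactly as the paper does, merely realising its removal by a fresh initial state rather than by intersecting with $(B\setminus x)B^*$. The only difference is presentational: you verify the correspondence between node~$n$ and the $n$-th letter of $\sigma^\omega(a)$ by parsing the fixed point $\sigma^\omega(a)=\sigma(\sigma^\omega(a))$ into blocks aligned with the generation procedure, where the paper runs an induction on word length in radix order; both arguments establish the same fact.
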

  \begin{proof}
    We denote by~$\sigma$ the endomorphism~$A^*\rightarrow A^*$ prolongable on a letter~$a$ of~$A$;
      and by~$g$ the projection~$A^*~\rightarrow~B^*$ such that
           $$ \signature=f_\sigma(\sigma^\omega(a)) \ee \text{and} \ee\labelling=g(\sigma^\omega(a)) \eqpnt$$
    Since we are using two alphabets at the same time we will, in this proof, consider 
      that~$a,b,c$ are letters of~$A$ and $x$ is a letter of~$B$.
      
    We consider the automaton~$\Ac=\aut{A,B,\delta,a}$, whose set 
      of \strong{state} is equal to~$A$; the alphabet is equal to~$B$;
      the initial state is the letter~$a$, all states are accepting;
      and the transition function is defined by:
      $$ b\pathx{x} c \ee \text{if there exists~$i$ such that }\left\lbrace\begin{array}{l}\text{$c$ is the $i$-th letter of }\sigma(b)\\ \text{$x$ is the~$i$-th letter of }g(b)\end{array}\right.$$
    (\cf \rtable{sumup} for a summary of this transformation).
    \smallskip
    
    Note that there is loop on the initial state~$a$,
      since the morphism~$f$ is prolongable on a;
      we denote by~$x$ the label of this loop, that is, the smallest letter of~$g(a)$.
    This loops corresponds to the usual 0-loop differentiating \itrees from 
      trees and we will consider
      in the following the language~$L = L(\Ac)\xmd \bigcap \xmd ((B \setminus x). B^*)$
      where the loop is removed \emph{on the root only}.

    
    \smallskip
    
    Let us prove that the labelled signature of~$L$ is equal to~$(\signature,\labelling)$.
    We denote by~$(u_i)_i$ the following sequence of words (over $A$)~$u_0= a$,~~~$u_1=a^{-1}\sigma(a)$
      and for all~~~$i>0$,~$u_{i+1}=f(u_i)$.
    It follows that~$u_0\xmd u_1 \xmd \cdots \xmd u_i =  \sigma^i(a)$.
    
    Let us fix an~$i$ and consider the words of~$L$ of length~$i$ in 
      radix order; we denote by~$w_k$ the~$(k+1)$-th word of L of length~$i$.
    It can be easily proven (by induction over~$i$) that 
      $$ \forall k\in\N \ee a\pathx{w_k}[\Ac] c_k \ee \text{ where }c_k\text{ is the $k$-th letter of~$u_i$}\eqpnt$$
    
    It follows that the~$(k+1)$-th word (of any length) of~$L$ in radix order reaches in
      the automaton~$\Ac$,
      the~$k$-th letter of~$\sigma^\omega(a)$.
    Since the outgoing transitions of a state~$b\in A$ are labelled by the letters 
      of the words~$g(b)\in B^*$,
      it follows that the labelled signature of~$L$ is equal 
      to~$(\signature,\labelling)$.
%
%
%
  \end{proof}
  
  \begin{proposition}\lproposition{rat->subst}
    The labelled signature of a prefix-closed rational language is a substitutive signature.
  \end{proposition}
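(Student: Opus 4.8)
The plan is to turn an automaton for the language directly into the data $(\sigma,g)$ of \rtable{sumup}, running the construction in the proof of \propo{subst->rat} backwards. The first step is to choose the automaton with care. Starting from the minimal deterministic automaton of the prefix-closed language $L$, I delete its sink state; since $L$ is prefix-closed, every remaining state is then reachable and accepting, and the transition $\delta(q,a)$, with $q=\delta(i,w)$, is defined exactly when $wa\in L$. I then replace the initial state by a fresh copy $\hat\imath$ bearing the same outgoing transitions but \emph{no incoming arc} (discarding any state that becomes unreachable), which gives a trim, all-accepting deterministic automaton $\Ac=\aut{\Sigma,Q,\delta,\hat\imath}$. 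The reason for this last move is the one delicate point of the argument: the initial state must not be reachable from itself, or else the coding $f_\sigma$ would assign the ``${+}1$'' value $d(\hat\imath)+1$ of \requation{n0} to an interior node rather than to the root. Below, $d(q)$ denotes the out-degree of a state $q$; note that $d(q)$ is also the number of children of any node of the tree of $L$ whose breadth-first traversal reaches $q$.

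Next I read off the substitution. For a state $q$, let $a^{q}_{1}<\dots<a^{q}_{d(q)}$ be the letters labelling its outgoing transitions, listed in increasing order. On the alphabet $A=Q$ I set $\sigma(q)=\delta(q,a^{q}_{1})\cdots\delta(q,a^{q}_{d(q)})$ and $g(q)=a^{q}_{1}\cdots a^{q}_{d(q)}$ for $q\ne\hat\imath$, while $\sigma(\hat\imath)$ is the same word prefixed by the letter $\hat\imath$ and $g(\hat\imath)$ is prefixed by a fresh letter $x$ declared smaller than every letter of $\Sigma$ (this $x$ will label the root loop; it is the counterpart of the loop label removed in the proof of \propo{subst->rat}). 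By construction $\wlen{g(q)}=\wlen{\sigma(q)}$ for every $q$, and $\sigma$ is prolongable on $\hat\imath$: it begins with $\hat\imath$, and since $\hat\imath$ has no incoming arc an immediate induction gives $\sigma^{n}(\hat\imath)=\hat\imath\,v_{1}\cdots v_{n}$, where $v_{\ell}$ lists in breadth-first order the states reached by the depth-$\ell$ nodes of the tree of $L$; hence $\wlen{\sigma^{n}(\hat\imath)}$ is the number of nodes of depth at most $n$, which tends to infinity because that tree is infinite.

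It remains to identify $\sigma^{\omega}(\hat\imath)$. By the previous display it is the infinite word $q_{0}q_{1}q_{2}\cdots$ listing the states of $\Ac$ in breadth-first order of the tree of $L$; thus $q_{k}$ is the state reached by the $(k+1)$-th word of $L$ in radix order (with $q_{0}=\hat\imath$, the root). Applying $f_\sigma\colon q\mapsto\wlen{\sigma(q)}$ gives $d(q_{k})$ for $k\ge1$ and $d(\hat\imath)+1$ for $k=0$, which is precisely the signature of $L$ by \defin{signature}; and applying $g$ block by block gives, for each node, the labels of its outgoing edges in breadth-first order, i.e.\ the labelling of $L$. So the labelled signature of $L$ equals $\bigl(f_{\sigma}(\sigma^{\omega}(\hat\imath)),\,g(\sigma^{\omega}(\hat\imath))\bigr)$, which is a substitutive signature. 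The computation identifying $\sigma^{\omega}(\hat\imath)$ with the breadth-first state sequence is exactly the one in the proof of \propo{subst->rat}, only read in the reverse direction; alternatively, one may observe that the automaton attached to $(\sigma,g)$ by \rtable{sumup} is $\Ac$ together with a loop at $\hat\imath$ labelled $x$, invoke \propo{subst->rat} to get $L_{(\signature,\labelling)}=L$, and conclude by uniqueness of the labelled signature of $L$.

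I do not expect a genuine obstacle here: the whole proof is a transcription of \rtable{sumup}, and all the care is concentrated at the root --- (i) arranging that the initial state has no incoming arc; (ii) introducing the extra letter $x$ and checking that its use is consistent with the convention fixing the root-loop label in ``the labelled signature of $L$''; and (iii) keeping the correspondence between levels of the tree and radix order straight when identifying $\sigma^{\omega}(\hat\imath)$.
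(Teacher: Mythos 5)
Your proof is correct and follows essentially the same route as the paper's: read $\sigma$ and $g$ off a deterministic automaton via the correspondence of \rtable{sumup}, check prolongability at the initial state, and identify $\sigma^{\omega}$ with the breadth-first sequence of states as in \propo{subst->rat}. The only difference is the device used to fix up the root: the paper takes the minimal automaton of $\#^{*}L$ for a fresh letter $\#$ smaller than all others, which automatically produces an initial state carrying a $\#$-loop and no other incoming arc, whereas you split the initial state and insert the loop and the fresh minimal letter by hand --- the two devices are interchangeable and address exactly the root issues you identify.
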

  \begin{proof} 
    Let~$L$ be a rational language and a finite minimal deterministic 
      trim automaton~$\Ac=\aut{\Sigma,Q,\delta,i}$ accepting the 
      language~\xmd$\#^*L$,~`$\#$' being a letter which does not appear in~$L$,
      and is fixed as smaller than every other letter.
    Reusing the transformation summed up in~\rtable{sumup}, we define the 
      morphisms:
      $$\begin{array}{lccl}
        \sigma:&\e Q^*\e&\longrightarrow \e & Q^* \\
               & p& \longmapsto \e & q_0 \xmd q_1 \xmd q_2\cdots q_k \\
      \end{array}
      \eee
      \begin{array}{lccl}
        g:&\e Q^*\e&\longrightarrow \e & \Sigma^* \\
               & p& \longmapsto \e & a_0\xmd a_1\xmd a_2\cdots a_k \\
      \end{array}$$
    where~$a_0<a_1<\cdots<a_k$, and for all~$i\in\intint{0}{k}$,~$p \pathx{a_i}[\Ac] q_i$.
    It follows from definition that~$\sigma$ is prolongable on the letter~$i\in Q$, 
      since it corresponds to the initial state of~$\Ac$
      on which there is necessarily a loop labelled by~$\#$.
    
    From there, the proof is essentially the same as the one from \rproposition{subst->rat}.
  \end{proof}

%
\section{Conclusion and future work}
In this work, we  introduced a way of effectively describing infinite 
  trees and languages
  by infinite words using a simple breadth-first traversal.
Since this transformation is essentially one-to-one, 
  it is natural to wonder which class of words is associated
  with which class of languages and which properties of the former
  can be translated into properties of the latter.
  
In this first work on the subject, we have proved that rational languages 
  are associated with (a particular subclass of) substitutive words.

In a forthcoming paper \cite{MarsSaka2014a}, we study the class of languages 
  associated with periodic signatures and how they are related to the 
  representation language in rational base numeration systems.
In both cases these results express the intimate relationship between signatures and numeration systems.

Our aim is to further explore this relationship 
  by means of the notion of direction of a signature, 
  that generalises the notion of growth ratio given at \rsection{subst}.
For instance, a rational base numeration system has a signature 
  deduced from the Christoffel word associated with that rational number.
  
%
%

\clearpage
\hfill
\bibliographystyle{plain}
\bibliography{%
  ../bibliography,%
  \BIBINPUTSDIR Alexandrie-abbrevs,%
  Alexandrie-AC,%
  Alexandrie-DF,%
  Alexandrie-GL,%
  Alexandrie-MR,%
  Alexandrie-SZ%
}
\addcontentsline{toc}{section}{References}
\clearpage
\appendix
\renewcommand{\thesection}{A.\arabic{section}}
\renewcommand{\thefigure}{A.\arabic{figure}}


\setcounter{section}{2}
\setcounter{thm}{0}

\begin{center}
  {\Large\bf Appendix: Some More Figures}
\end{center}

\vspace*{0.5cm}

\begin{figure}[h!]
  \centering
  \includegraphics[width=\linewidth]{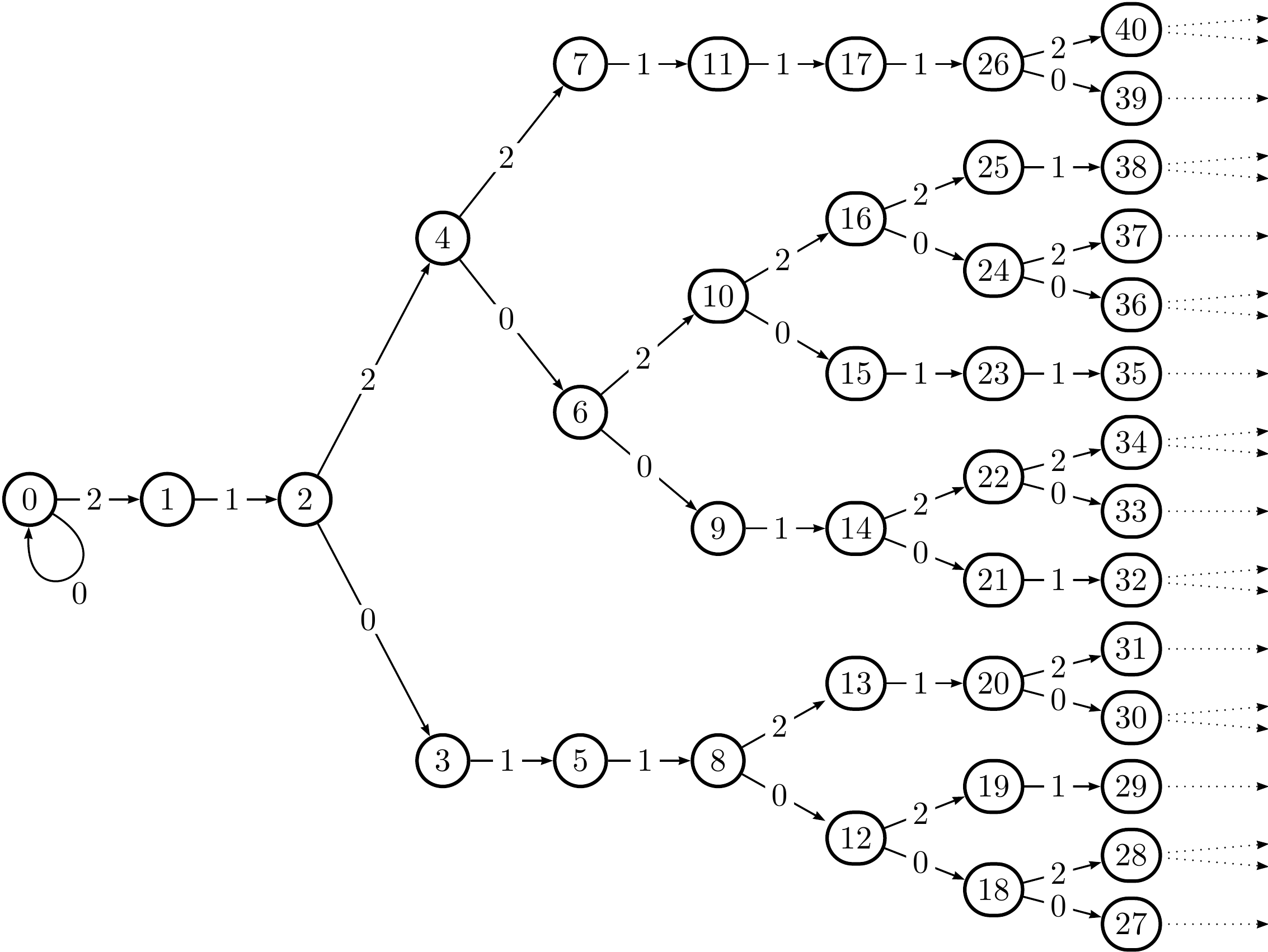}
  \caption{The language of the representations of integers in base~$\frac{3}{2}$, its signature is~$(21)^\omega$
  and its labelling is~$(021)^\omega$.}
  \lfigure{l32}
\end{figure}

\vspace*{0.5cm}

\begin{figure}[h!]
  \centering
  \includegraphics[scale=1.25]{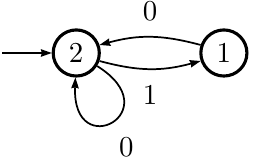}
  \caption{The automaton accepting the representations of integers in base Fibonnaci}
  \lfigure{fibo_m}
\end{figure}

%
%
%
%
\end{document}